\newtheorem{theorem}{Theorem}[section]
\newtheorem{proposition}{Proposition}[section]
\newtheorem{lemma}{Lemma}[section]
\newtheorem{corollary}{Corollary}[section]
\newtheorem{definition}{Definition}[section]
\newcommand{\YZ}[1]{{\color{blue}[YZ: #1]}}
\newcommand{\fz}[1]{{\color{orange}[FZ: #1]}}
\newcommand{\vtwo}[1]{{\color{black}#1}}
\begin{document}

\preprint{APS/123-QED}

\title{Classical Simulability of Quantum Circuits with Shallow Magic Depth}

\author{Yifan Zhang\orcidlink{0000-0003-2448-2035}}
\email{yz4281@princeton.edu}
\affiliation{Department of Electrical and Computer Engineering, Princeton University, Princeton, NJ 08544}
 
\author{Yuxuan Zhang\orcidlink{0000-0001-5477-8924}}%
\email{quantum.zhang@utoronto.ca}
\affiliation{Department of Physics and Centre for Quantum Information and Quantum Control, University of Toronto, 60 Saint George Street, Toronto, Ontario, Canada M5S 1A7}
\affiliation{Vector Institute, W1140-108 College Street, Schwartz Reisman Innovation Campus
Toronto, Ontario, Canada, M5G 0C6}
\date{\today}

\begin{abstract}
\vtwo{Quantum magic is a necessary resource for for quantum computers to be not efficiently simulable by classical computers.}
Previous results have linked the \emph{amount} of quantum magic, characterized by the number of $T$ gates or stabilizer rank, to classical simulability. However, the effect of the \emph{distribution} of quantum magic on the hardness of simulating a quantum circuit remains open. In this work, we investigate the classical simulability of quantum circuits with alternating Clifford and $T$ layers across three tasks: amplitude estimation, sampling, and evaluating Pauli observables. In the case where all $T$ gates are distributed in a single layer, performing amplitude estimation and sampling to multiplicative error are already classically intractable under reasonable assumptions, but Pauli observables are easy to evaluate. Surprisingly, with the addition of just one $T$ gate layer or merely replacing all $T$ gates with $T^{\frac{1}{2}}$, the Pauli evaluation task reveals a sharp complexity transition from P to GapP-complete. 
Nevertheless, when the precision requirement is relaxed to 1/\text{poly}(n) additive error, we are able to give a polynomial time classical algorithm to compute amplitudes, Pauli observable, and sampling from $\log(n)$ sized marginal distribution for any magic-depth-one circuit that is decomposable into a product of diagonal gates. \vtwo{This rules out certain forms of quantum advantages in these circuits.}
Our research provides new techniques to simulate highly magical circuits while shedding light on their complexity and their significant dependence on the magic depth.
%
\end{abstract}

\maketitle

\begin{table*}
\caption{\label{tab:sumary}
Summary of the complexity of classically simulating circuits with shallow magic depth. Estimating amplitudes and Pauli observable are all up to multiplicative error in $T$-depth-one, $T$-depth-two, and $T^{\frac{1}{2}}$-depth-one circuits. Simulating diagonal magic depth one
are all up to up to $\epsilon = 1/\text{poly}(n)$ additive error and with probability $1-\delta$. Green and red items represent positive and negative results obtained in this manuscript.}
\begin{ruledtabular}
\begin{tabular}{p{0.25\linewidth} | p{0.2\linewidth} p{0.2\linewidth} p{0.35\linewidth}}
  & Amplitude& Pauli &Sampling\\
  \hline
  $T$ depth one& \textcolor{BrickRed}{GapP-complete}&  \textcolor{OliveGreen}{$O(n^3)$} &\textcolor{BrickRed}{classically hard unless $\Delta_3$P=PH}\\ 
  $T$ depth two& \textcolor{BrickRed}{GapP-complete}& \textcolor{BrickRed}{GapP-complete} &\textcolor{BrickRed}{classically hard unless $\Delta_3$P=PH}\\
 $T^{\frac{1}{2}}$ depth one& \textcolor{BrickRed}{GapP-complete}& \textcolor{BrickRed}{GapP-complete} &\textcolor{BrickRed}{classically hard unless $\Delta_3$P=PH}\\
 Diagonal magic depth one& \textcolor{OliveGreen}{$O(n^3 + \frac{n \log(2/\delta )}{ \epsilon^2})$} & \textcolor{OliveGreen}{$O(n^3 + \frac{ n \log(2/\delta )}{ \epsilon^2})$} & \textcolor{OliveGreen}{$\text{poly}(n,\delta,\epsilon)$ for $\log(n)$ marginals}\\
 \end{tabular}
\end{ruledtabular}
\end{table*}

\section{Introduction}
Classical probabilistic algorithms cannot efficiently simulate universal quantum computers -- this is a common belief underscored by many renowned examples: sampling hard distributions~\cite{aaronson2011computational,aaronson2013bosonsampling,childs2013universal,aaronson2016complexity}, solving computational problems~\cite{deutsch1992rapid,shor1994algorithms,simon1997power,shor1999polynomial}, and simulating quantum dynamics~\cite{feynman2018simulating,childs2018toward,daley2022practical}. However, certain quantum information processing tasks do not require computational universality. For example, randomized benchmarking~\cite{knill2008randomized} and certain types of quantum error correction codes~\cite{gottesman1997stabilizer}, or quantum states with topological orders~\cite{kitaev1997quantum} can be efficiently simulated classically for thousands of qubits, thanks to the Gottesman-Knill theorem~\cite{gottesman1998heisenberg}. The theorem states that the Clifford group generated by the gate set $\{H, S, CNOT\}$, despite their ability to generate substantial entanglement, can be simulated classically in polynomial time in $n$~\cite{aaronson2004improved}. As such, Clifford operations are generally considered inexpensive for classical simulation. In contrast, non-Clifford features, often referred to as ``magic", are crucial  and sometimes regarded as a scarce resource for realizing the full potential of quantum computation. Understanding the relationship between the classical hardness of simulation and the amount of magic in a quantum system is therefore essential for both theoretical insights and practical advancements in quantum computing. 

But how should we quantify magic? The most straightforward way to quantify it is by the \emph{number} of non-Clifford gates, such as the $T$ gates, in a circuit. The early seminal algorithm by Aaronson and Gottesman has a runtime that scales exponentially with the number of non-Clifford gates~\cite{aaronson2004improved}. 
Using the low-stabilizer rank approximation~\cite{garcia2012efficient,garcia2014geometry}, recent simulation algorithms drastically reduce the simulation cost when the number of $T$ gates is small~\cite{bravyi2016improved,bravyi2016trading,bravyi2019simulation,kocia2020improved,qassim2021improved}. However, these algorithms still cannot avoid the exponential runtime in the presence of an extensive amount of magic. 

Is this scaling fundamental? In this work, we partially circumvent the exponential barrier by proposing a third angle: the classical simulation cost depends on the magic \emph{depth}. Specifically, if all magic gates concentrate on one layer of the circuit and are not causally dependent on one another, then certain classical simulation tasks have only polynomial runtime even in the presence of $O(n)$ magic gates. We motivate this new angle below and explain why the shallow magic depth could be favorable for classical simulations.

\subsection{Magic as Interference in Pauli Basis}
We begin by offering insight into why magic depth should play a crucial role in classical simulability. One way to understand magic is to think of it as ``interferometers'' that generate superposition in the Pauli basis. As an example, under the evolution of a $T$ gate, the Pauli $X$ and $Y$ operators become superimposed.
\begin{align}
    T X T^\dag &= \frac{1}{\sqrt{2}}(X+Y) \\
    T Y T^\dag &= \frac{1}{\sqrt{2}}(-X+Y)
\end{align}
This can be compared with the Hadamard gate which generates superposition in the computational basis: $H\ket{0}=\frac{1}{\sqrt{2}}(\ket{0}+\ket{1})$ and $H\ket{1}=\frac{1}{\sqrt{2}}(\ket{0}-\ket{1})$. Now, suppose there is only one layer of Hadamard gates in the circuit, sandwiched by other gates that do not generate superposition (e.g. phase gates and permutation gates; also called ``almost classical gates'', see Definition \ref{def:almost_classical}), then classically simulating this circuit is trivial, as the final state is a uniform superposition of all computational basis, each carrying a phase that can be efficiently computed. In other words, no interference can happen with only one layer of Hadamard gates.

On the other hand, two layers of Hadamard gates can generate interference and render the final state classically intractable. In fact, this notion of generating interference using multiple layers of Hadamard gates is already investigated and characterized by the Fourier Hierarchy $\mathcal{FH}_m$ \cite{shi2005quantum}. Informally, $\mathcal{FH}_m$ are problems that can be solved using $m$ layers of Hadamard gates. $\mathcal{FH}_1= \text{BPP}$ because the output probability is uniform. Notably, $\mathcal{FH}_2$ already contains hard problems such as factoring \cite{kitaev1995quantum}, demostrating the power of quantum computing with only two layers of Hadamard gates.

The drastic difference between $\mathcal{FH}_1$ and $\mathcal{FH}_2$ motivates us to ask a similar question in the context of magic: if all magic gates concentrate at one layer in a circuit, could the classical simulation become simplified because of the lack of interference? On the other hand, if there are two layers of magic gates, could the classical simulation suddenly become hard due to interference?

It turns out that the Pauli basis behaves differently from the computational basis. The reason is that a pure initial state $\ket{0^n}$ is sparse in the computational basis but has exponential support in the Pauli basis. $\ketbra{0^n}{0^n}$ is a stabilizer state generated by all local $Z$ operators. Therefore, $\ketbra{0^n}{0^n}$ decomposes into the sum of the $2^n$ Pauli operators that contain only $Z$ or $I$. Thus, even after one layer of magic gates, the Pauli decomposition of the state already becomes complicated. As we will see later, even with one layer of magic gates, certain computational tasks already become classically intractable, while some other tasks admit polynomial-time algorithms.

\subsection{Summary of Results}   
%


We study the classical simulability of quantum circuits with one or two layers of magic gates. We analyze the computational complexity of three simulation tasks: amplitude estimation, sampling, and estimating Pauli observable. We present the magic-depth-one circuit that we consider in this work in Fig. \ref{fig:td1_setup_iqp}(a,b). The unitary dynamics $U=U_{c,r} \prod_i D_i U_{c,l}$ consists of two Clifford unitaries $U_{c,l}$ and $U_{c,r}$ sandwiching a layer of magic gates $\prod_i D_i$, each acting on $O(1)$ qubits. We note that while each $D_i$ is local, $U_{c,l}$ and $U_{c,r}$ can be arbitrarily non-local. Fig. \ref{fig:td1_setup_iqp}(a) shows the task of computing amplitudes $\bra{0} U \ket{0}$, while Fig. \ref{fig:td1_setup_iqp}(b) shows the task of computing Pauli observable $\bra{0} U^\dag P U \ket{0}$. Notice that we can always remove $U_{c,r}$ by replacing $P$ with $U_{c,r}^\dag P U_{c,r}$ which is also a Pauli operator.
We also consider estimating amplitudes and Pauli observable up to different precision. Depending on the number of magic layers, the simulation tasks, and the precision requirement, the complexity is drastically different. 

\paragraph{\vtwo{Hardness of classical simulations at multiplicative error.}} To begin with, we prove that computing amplitudes and sampling to multiplicative error are already classically hard even at $T$-depth-one. This is accomplished by a newly devised ``parallelization trick'' that reduces a degree-three ``instantaneous quantum polynomial'' (IQP) circuit into a $T$-depth-one circuit, and utilizes the known hardness result for sampling complexity for the IQP circuit. On the contrary, we give a polynomial circuit for exactly computing the Pauli observable for circuits of $T$ depth one, making use of the symmetry the $T$ gate possesses, as it belongs to the third level of Clifford Hierarchy~\cite{gottesman1999demonstrating}. Surprisingly, by adding one layer of $T$ gate to the circuit, or simply substituting all $T$ gates with $T^{1/2}$, the hardness of Pauli evaluation to multiplicative accuracy goes through a sharp transition, from P to GapP-complete.

\vtwo{\paragraph{An efficient classical algorithm at additive error in magic-depth-one circuits.} In addition, when one demands $1/\text{poly}{(n)}$ additive error instead of a multiplicative error, then estimating amplitudes and Pauli observable as well as sampling from a $\log(n)$ sized marginal distribution become classically easy at magic-depth-one for arbitrary diagonal magic gates. We show it by constructing an explicit classical algorithm, shown in Algorithm \ref{alg:additive}, to perform these tasks. This algorithm is practical and has the run-time roughly equivalent to sampling stabilizer states. This also rules out the possibility of quantum advantages in diagonal magic-depth-one circuits without taking advantage of a marginal distribution with $\omega(\log(n))$ qubits. The above main results are summarized in Tab.~\ref{tab:sumary}.}

\paragraph{\vtwo{A simulation algorithm for circuits with shallow magic depth.}} Lastly, we provide a path-integral algorithm for circuits with more than one layer of magic gates. While the algorithm scales exponentially in the system size, the scaling in the number of magic layers is sub-exponential, \vtwo{rendering it favorable over exists techniques of low-stabilizer-rank decompositions in circuits with extensive magic but shallow magic depth}.




\begin{figure*}
\includegraphics[width=\linewidth]{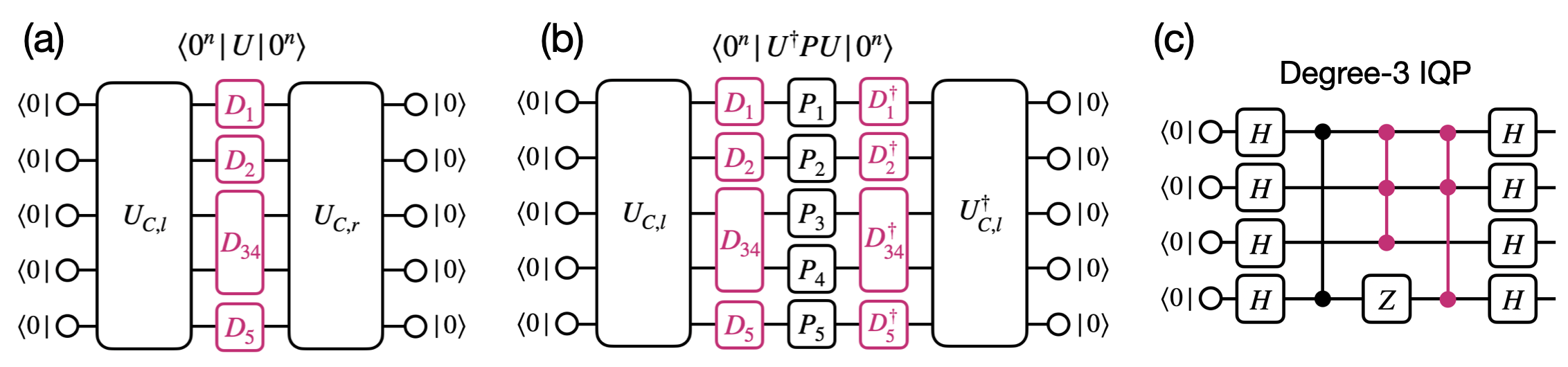}
\caption{\label{fig:td1_setup_iqp} 
 \textbf{Circuits considered in this paper.} Clifford gates are marked black and non-Clifford gates are marked in magenta color. (a) The task of computing amplitude in magic-depth-one circuits. $D_i$ are magic gates acting on $O(1)$ qubits, while $U_{c,l}$ and $U_{c,r}$ are Clifford unitaries sandwiching the magic layer. (b) The task of computing Pauli observable in magic-depth-one circuits. Note that we remove $U_{c,r}$ and replace $P$ with $U_{c,r}^\dag P U_{c,r}$. (c) An example of the degree-three IQP circuits. The magic gates $CCZ$ are in magenta color.}
\end{figure*}

\section{Hardness of Computing Amplitudes in Magic-Depth-One Circuits}
In this section, we establish the hardness of computing amplitudes in magic-depth-one circuits. There are many approaches to establish such hardness, and we will show the hardness by connecting to the IQP circuit, a candidate for quantum advantage demonstrations~\cite{shepherd2009temporally,bremner2011classical,bremner2016average,bluvstein2024logical} where the hardness of computing amplitude is well known. An IQP circuit can be written in the format: $H^{\otimes n}D_{IQP}H^{\otimes n}$, where $H$ represents the Hadamard gate and $D_{IQP}$ is a generic diagonal gate. For our purpose, it suffices to consider a subset of IQP circuits, the so-called degree-three IQP:

\begin{definition}
A degree-three IQP circuit has its $D_{IQP3}$ synthesized from only $Z$, $CZ$, and $CCZ$ gates.
\end{definition}
We show an example of the degree-three IQP circuit in Fig. \ref{fig:td1_setup_iqp}(c). The name comes from the fact that the phase $f(x)=\pm 1$ that $D_{IQP3}$ applies to a basis state $\ket{x}$ can be computed from a third-degree polynomial over the finite field $\mathbb{F}_2$. The $Z$, $CZ$, and $CCZ$ gates correspond to the first, second, and third degree terms in the polynomial \cite{maslov2024fast}. It is known that computing the amplitude of degree-three IQP circuits, even up to a small multiplicative error, is GapP-complete~\cite{bremner2016average,fenner1994gap,ehrenfeucht1990computational}.
We now prove the hardness of computing amplitudes of $T$-depth-one circuits by providing an algorithm that compiles any degree-three IQP circuit into a $T$-depth-one circuit.

\begin{proposition} \label{iqp_sharp_p}
    Computing $\bra{0}H^{\otimes  n}D_{IQP3}H^{\otimes  n}\ket{0}$ up to a $1$ multiplicative error is GapP-complete \cite{bremner2016average}.
\end{proposition}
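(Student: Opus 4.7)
The plan is to recognize the amplitude as a Boolean-polynomial exponential sum and then invoke a known GapP-completeness result about such sums. First I would expand, using $H^{\otimes n}\ket{0^n} = 2^{-n/2}\sum_y \ket{y}$ together with the diagonal action $D_{IQP3}\ket{y} = (-1)^{p(y)}\ket{y}$, to obtain
\begin{equation}
\bra{0^n} H^{\otimes n} D_{IQP3} H^{\otimes n}\ket{0^n} = \frac{1}{2^n} \sum_{y \in \mathbb{F}_2^n} (-1)^{p(y)},
\end{equation}
where $p : \mathbb{F}_2^n \to \mathbb{F}_2$ is a polynomial of degree at most $3$ whose cubic, quadratic, and linear monomials are read off the positions of the $CCZ$, $CZ$, and $Z$ gates in $D_{IQP3}$ respectively. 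The final projection onto $\bra{0^n}$ after the second Hadamard layer kills all Fourier modes except the zero frequency, producing precisely this sum.

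Next I would invoke the classical theorem of Ehrenfeucht--Karpinski, sharpened in the Fenner--Fortnow--Kurtz framework for GapP, stating that computing $G(p) := \sum_y (-1)^{p(y)}$ is GapP-complete already when $p$ is a degree-$3$ polynomial over $\mathbb{F}_2$. Containment in GapP is immediate, since $G(p)$ is the difference of the two $\#\mathrm{P}$ counting functions $|\{y : p(y)=0\}|$ and $|\{y : p(y)=1\}|$. The nontrivial direction is a reduction from an arbitrary GapP function, accomplished by arithmetizing the accepting and rejecting computation paths of a nondeterministic Turing machine into the level sets of a single degree-$3$ polynomial, using auxiliary variables to keep the degree bounded while emulating higher-degree constraints.

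Finally, I would upgrade the hardness from exact evaluation to a multiplicative-$1$ approximation. An estimate $\tilde{A}$ satisfying $|\tilde{A} - A| \leq |A|$ lies in $[0, 2A]$ when $A > 0$ and in $[2A, 0]$ when $A < 0$, so it preserves $\mathrm{sign}(A)$ and separates $A = 0$ from $A \neq 0$. A standard self-reduction --- shifting $p$ by low-degree perturbations chosen to toggle or null individual monomials --- lets one recover $G(p)$ from polynomially many sign queries on a carefully chosen family of amplitudes, transferring GapP-hardness from the exact problem to the approximate one.

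The step I expect to be the main obstacle is this last one: a multiplicative-$1$ approximation is extremely coarse, and a priori sign information alone may appear too lossy to encode general GapP values. The resolution, as in Bremner--Jozsa--Shepherd, is that the degree-$3$ IQP amplitude family is already rich enough that sign extraction across the family can simulate arbitrary GapP queries, so the exact-counting hardness lifts without loss into the multiplicative-approximation regime.
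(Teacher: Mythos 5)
Your proposal follows essentially the same route as the paper. The paper treats the proposition as a known result: it cites Proposition~8 of \cite{bremner2016average} for the multiplicative-error hardness of the absolute amplitude and then notes that, for the \emph{signed} amplitude, a multiplicative error of $1$ is enough to read off the sign, which via the binary-search argument in \cite{feffermanPCMI} recovers the amplitude exactly. You independently reconstruct the same three ingredients: (i) rewrite the amplitude as $2^{-n}\sum_y(-1)^{p(y)}$ for a degree-$3$ polynomial $p$, (ii) invoke GapP-completeness of degree-$3$ exponential sums (Ehrenfeucht--Karpinski, in the Fenner--Fortnow--Kurtz framework), and (iii) lift to multiplicative-$1$ hardness via sign extraction plus binary search. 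That is the intended argument.

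One imprecision worth fixing: the binary-search lift is not done by ``shifting $p$ by low-degree perturbations chosen to toggle or null individual monomials.'' Flipping or deleting monomials of $p$ merely permutes the sign pattern of $(-1)^{p(y)}$; it does not shift the integer $G(p)$ by a controlled additive amount, and the degree-$3$ constraint forbids multiplying $p$ by an indicator of ancilla bits (which would be the naive way to splice in a known threshold). The actual mechanism in the cited references enlarges the instance so that a new degree-$3$ amplitude encodes a comparison of $G(p)$ against an adjustable known quantity, and then binary search runs on the sign of those auxiliary amplitudes. Since you flag this step as the likely obstacle and defer to Bremner--Jozsa--Shepherd / Fefferman for the resolution, this is a cosmetic imprecision rather than a gap, but the phrase ``toggle or null individual monomials'' should be replaced by the correct ancilla/threshold construction if you want a self-contained proof.
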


The complexity class GapP is defined as follows: given a nondeterministic polynomial-time Turing machine $M$, let $acc_M(x)$  be the number of accepting paths of $M$ on input $x$, and $ rej_M(x) $ be the number of rejecting paths. GapP is the class of functions $f(x)$ such that
\begin{equation}
    f(x) = acc_M(x) - rej_M(x)
\end{equation}
In the IQP setup, $x$ is the classical description of the IQP circuit and $f(x)$ is the amplitude. GapP is closely related to the counting class \#P. 
\vtwo{
The $1$ multiplicative error means that the estimate $\tilde{z}$ of $z$ deviates by at most $|\tilde{z} - z| \le z$. This means that when $z$ is small, the absolute error is small accordingly. Note that the multiplicative error of $1$ here differs from the multiplicative error of $\frac{1}{2}$ in \cite{bremner2016average} (see Proposition 8 therein). This is because Ref. \cite{bremner2016average} considers the task of finding the absolute value of the amplitude. On the other hand, we consider the signed amplitude here, and it is known that computing the sign alone is already sufficient to determine the amplitude exactly. This is done through a binary search, discussed in \cite{feffermanPCMI}. Therefore, having a multiplicative error of $1$ is sufficient as we can already extract the sign faithfully.
}

We will now establish the hardness of computing the amplitude in $T$-depth-one circuits. The proof is based on compiling any degree-three IQP circuit to one layer of $T$ gates. Because of Proposition \ref{iqp_sharp_p}, it follows that computing the amplitude of $T$-depth-one circuits, even up to a multiplicative error, is GapP-hard.

\subsection{Parallelization Trick}
We now give a procedure to compile any degree-three IQP circuit to one layer of gates of the form: $T^k$, where $k$ is some integer. As an initial step, we use a parallelization trick, shown in Fig. \ref{fig:parallelization}, to put all the diagonal gates in $D_{IQP3}$ in one layer. The parallelization trick works as follows. For each diagonal gate supported on a set of qubits, we introduce an equal number of ancilla initialized to $\ket{0}$, and then apply the $CNOT$ gates controlled by the original data qubits and target at the ancilla. For example, to parallelize $D_{23}$, we introduce two ancilla (the bottom two blue qubits) and then apply two $CNOT$ gates. We repeat the above steps for all diagonal gates. After that, we apply all diagonal gates to the corresponding ancilla simultaneously. Finally, we repeat the $CNOT$ gates to clean the ancilla, which means that the ancilla returns to $\ket{0}$ regardless of the state of the data qubits. We show that the new circuit after parallelization has the same effect as the original circuit.

\begin{figure}
\includegraphics[width=\linewidth]{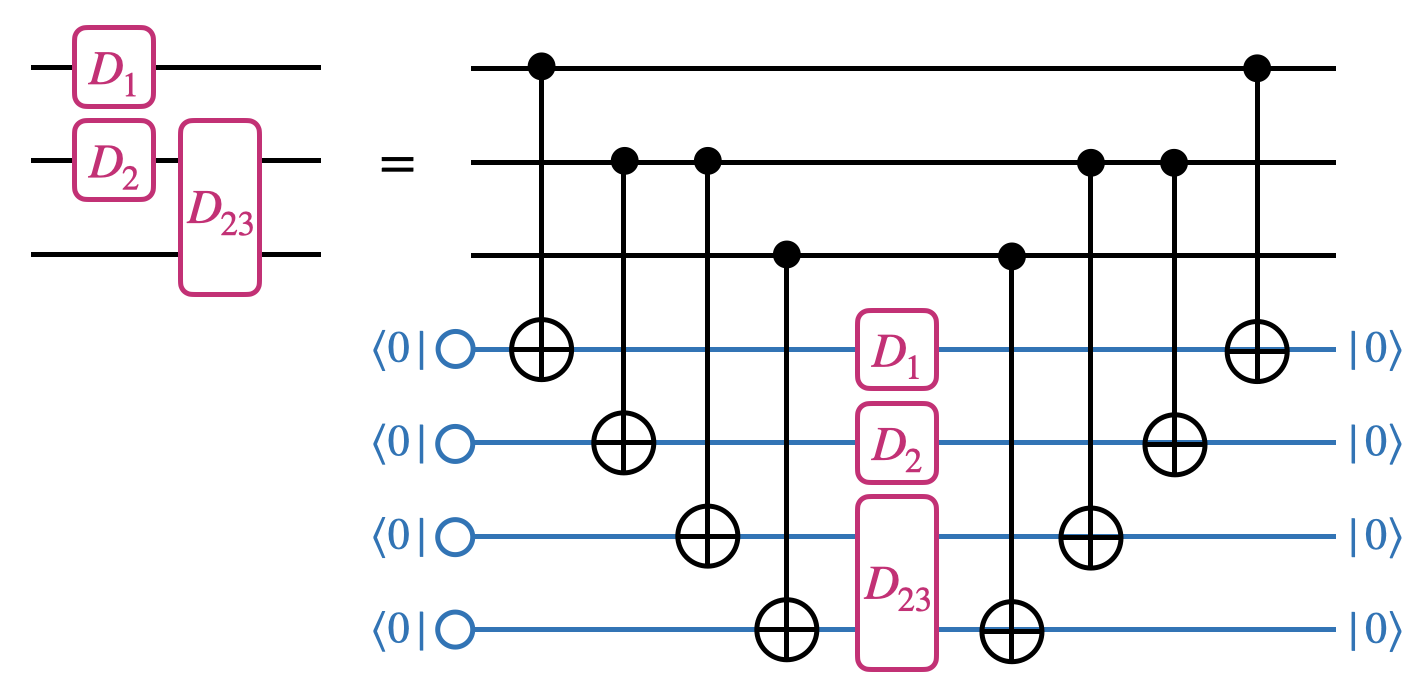}
\caption{\label{fig:parallelization}\textbf{The parallelization trick.} One introduces a set of ancilla (shown in blue) for each diagonal gate $D_i$ and copy the bitstring values to the ancilla. The phase gates are then applied simultaneously. Ancilla are cleaned in the end.}
\end{figure}

\begin{lemma}
    The parallelization trick in Fig. \ref{fig:parallelization} is equivalent the original circuit when $D_i$ are diagonal gates.
\end{lemma}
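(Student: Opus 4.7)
The plan is to verify the lemma by tracking the parallelized circuit's action on an arbitrary computational basis input and then invoking linearity. The key property is that a diagonal gate acts as $D_i\ket{y}=e^{i\phi_i(y)}\ket{y}$ for some phase function $\phi_i$ defined on bitstrings over $D_i$'s support $S_i$. Since the phase only depends on the \emph{value} of the bitstring and not on which qubits carry it, we can afford to apply $D_i$ to a fresh copy of the relevant bits stored in ancillae and still pick up the correct phase; this is the standard compute--copy--uncompute pattern.

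Concretely, I would fix an input $\ket{x}\otimes\ket{0\cdots 0}$, where $\ket{x}$ is a computational basis state on the data register and the second factor is the ancilla register, partitioned into blocks, one per diagonal gate. The first layer of CNOTs, each controlled on a data qubit and targeted on the corresponding fresh ancilla, sends this state to $\ket{x}\bigotimes_i\ket{x_{S_i}}$, where $x_{S_i}$ denotes the restriction of $x$ to $S_i$. Applying each $D_i$ to its dedicated ancilla block in parallel multiplies the joint state by $e^{i\phi_i(x_{S_i})}$ without changing any basis label, producing $\bigl(\prod_i e^{i\phi_i(x_{S_i})}\bigr)\ket{x}\bigotimes_i\ket{x_{S_i}}$. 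The final layer of CNOTs is identical to the first, and since CNOT is its own inverse and the ancilla blocks are still in the computational basis states $\ket{x_{S_i}}$ (the preceding diagonal gates only altered phases), this layer returns each ancilla to $\ket{0}$, yielding $\bigl(\prod_i e^{i\phi_i(x_{S_i})}\bigr)\ket{x}\otimes\ket{0\cdots 0}$.

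Comparing this with the original circuit $\prod_i D_i\ket{x}=\bigl(\prod_i e^{i\phi_i(x_{S_i})}\bigr)\ket{x}$ shows agreement on the data register while the ancillae factor out as $\ket{0\cdots 0}$. Extending to arbitrary inputs by linearity, the parallelized circuit acts as $\bigl(\prod_i D_i\bigr)\otimes I_{\mathrm{anc}}$ on the subspace of states with ancillae initialized to $\ket{0\cdots 0}$, which is the stated equivalence. I do not foresee a genuine obstacle here; the only point to check carefully is that the CNOTs in the final layer uncompute cleanly, which follows because diagonal gates preserve computational basis labels and the ancilla blocks associated with different $D_i$ are disjoint, so no cross-talk between blocks can interfere with the uncomputation.
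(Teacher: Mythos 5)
Your proposal is correct and takes essentially the same approach as the paper: track a computational basis input through the copy--phase--uncompute structure, observe that the diagonal gates only multiply by phases that depend on the restricted bitstring values (so copies work just as well as the originals), and conclude by linearity. The paper's proof is identical in substance, differing only in minor notational choices.
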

\begin{proof}
 We show that the matrix equation in Fig. \ref{fig:parallelization} is correct term by term in the computational basis. Suppose we input a state $\ket{x}$, where $x$ is a bitstring. We should expect the output to be multiplied by all phases of each diagonal gate $D_i$.
 \begin{equation}
     \prod_i D_i \ket{x} = (\prod_i  e^{i \phi_{D_i}(x)})  \ket{x}
 \end{equation}
Where $e^{i \phi_{D_i}(x)}=\bra{x} D_i \ket{x}$ denotes the phase $D_i$ applies to $\ket{x}$. Next, we consider the circuit on the right-hand side. Suppose we want to parallelize $I$ diagonal gates. We initialize a set of ancilla $\ket{0}_{A_i}$, $i=1,2, \ldots, I$. After the initial layers of $CNOT$ gates, we have
\begin{equation}
    \ket{x}_D \otimes \ket{x_1}_{A_1} \otimes \ket{x_2}_{A_2} \otimes \ldots \otimes \ket{x_I}_{A_I}
\end{equation}
Where $\ket{x}_D$ denotes the original data qubit and $\ket{x_i}_{A_i}$ denotes the ``copy" of a subset of the bitstring $x$ supporting $D_i$. For instance, in Fig. \ref{fig:parallelization}, the top ancilla (in blue) copies the first bit of $x$, the second ancilla copies the second bit of $x$, and the last two ancilla copy the second and the third bit of $x$. After copying the data qubit, we apply the diagonal gates to get
\begin{equation}
  (\prod_i  e^{i \phi_{D_i}(x)})  \ket{x}_D \otimes \ket{x_1}_{A_1} \otimes \ket{x_2}_{A_2} \otimes \ldots \otimes \ket{x_I}_{A_I}
\end{equation}
Here the phase is identical to the original circuit because $D_i \ket{x_i}_{A_i} = D_i \ket{x}_D$. Finally, the final layer of $CNOT$ gates resets all the ancilla to $\ket{0}_{A_i}$ without affecting the phase. Therefore, the circuit results in $(\prod_i  e^{i \phi_{D_i}(x)}) \ket{x}_D \ket{0}_{A_1 \ldots A_I}$ which is identical to the original circuit.
\end{proof}

Using the parallelization trick, we can put all the diagonal gates in $D_{IQP3}$ in one layer. Since the $Z$ and $CZ$ gates are Clifford, the hardness of the simulation comes from the presence of $CCZ$ gates. In the next subsection, we will show how to compile $CCZ$ gates into one layer of $T$ gates.

\subsection{Parallelizing Almost Classical Gates}
In this section, we discuss how to compile $CCZ$ gates into one layer of $T$ gates. We borrow the technique from \cite{selinger2013quantum}. First, the $CCZ$ gate can be synthesized from $CNOT$ and $T$ gates, shown in Fig. \ref{fig:ccz}(a). We observe that both $CNOT$ and $T$ gates are ``almost classical gates'' 
\begin{definition}\label{def:almost_classical}
    An almost classical gate maps any computational basis vector to some other computational basis vector with a phase. In other words, an almost classical gate $U$ can be written in the following form.
    \begin{equation}
        U = \sum_x  e^{i \phi(x)}\ketbra{f(x)}{x}
    \end{equation}
    Where $f(x): \{0,1\}^n \rightarrow \{0,1\}^n$ denotes a bijection of bitstrings and $e^{i \phi(x)}: \{0,1\}^n \rightarrow U(1)$ denotes the phase corresponding to each bitstring.
\end{definition}
$CNOT$ is a permutation in the computational basis and $T$ is a diagonal phase gate, so they are both almost classical gates. An observation is that the composition of almost classical gates is also an almost classical gate.
\begin{proposition}
The product of any two almost classical gates is also an almost classical gate
\end{proposition}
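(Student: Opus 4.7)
The plan is to prove this by direct computation, unfolding the definition of almost classical gates and using the orthonormality of the computational basis. Let $U_1 = \sum_x e^{i\phi_1(x)} \ketbra{f_1(x)}{x}$ and $U_2 = \sum_y e^{i\phi_2(y)} \ketbra{f_2(y)}{y}$ be two arbitrary almost classical gates, where $f_1, f_2$ are bijections on $\{0,1\}^n$ and $\phi_1, \phi_2$ are real-valued phase functions. I will argue that $U_2 U_1$ again fits the form in Definition~\ref{def:almost_classical}.

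First I would compute the product by distributing the sums:
\begin{equation}
U_2 U_1 = \sum_{x,y} e^{i\phi_2(y)} e^{i\phi_1(x)} \ket{f_2(y)} \braket{y | f_1(x)} \bra{x}.
\end{equation}
Since $\ket{y}$ and $\ket{f_1(x)}$ are both computational basis vectors, $\braket{y | f_1(x)} = \delta_{y, f_1(x)}$, which collapses the double sum to a single sum over $x$. This gives
\begin{equation}
U_2 U_1 = \sum_x e^{i\bigl(\phi_1(x) + \phi_2(f_1(x))\bigr)} \ketbra{f_2(f_1(x))}{x}.
\end{equation}

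To recognize this as almost classical, I define $f(x) := f_2(f_1(x))$ and $\phi(x) := \phi_1(x) + \phi_2(f_1(x))$. The map $f$ is a bijection as the composition of two bijections on $\{0,1\}^n$, and $\phi$ is a well-defined real function, so $e^{i\phi(x)} \in U(1)$. Thus $U_2 U_1 = \sum_x e^{i\phi(x)} \ketbra{f(x)}{x}$, which exactly matches the form of an almost classical gate.

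There is no real obstacle here — the result is essentially a one-line unfolding of the definition, with the orthonormality of the computational basis doing all the work. By induction, this proposition immediately extends to the product of any finite number of almost classical gates, which is the fact that will actually be invoked when parallelizing the $CNOT$ and $T$ gates appearing in the $CCZ$ decomposition of Fig.~\ref{fig:ccz}(a).
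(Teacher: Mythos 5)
Your proof is correct, and it is the natural argument: the paper states this proposition without proof, treating it as an immediate consequence of the definition, and your direct unfolding with the orthonormality relation $\braket{y|f_1(x)} = \delta_{y,f_1(x)}$ and the observation that $f_2 \circ f_1$ is a bijection is exactly the one-liner the authors had in mind.
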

We now state the lemma that allows us to put $T$ gates in one layer. We will state this in a more generic form where we wish to put some generic diagonal gates $D$ into one layer.
\begin{lemma}\label{almost_classical_one_layer}
    Given a gate set consists of almost classical gates, including CNOT. Suppose a diagonal gate $D$ is part of the gate set. Denote a unitary $U=U_IU_{I-1} \ldots U_{2}U_{1}$, where each gate $U_i$ is chosen from the gate set. Then $U$ can be compiled, after including the ancilla, so that all the gates of the form $D^k$, where $k$ is some integer, are in one layer.
\end{lemma}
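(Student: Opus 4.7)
The plan is to mimic the parallelization trick from the preceding lemma, but adapted to the interleaved setting in which the $D^{k_j}$ gates are scattered between arbitrary almost-classical gates rather than forming a single diagonal layer. For each occurrence of $D^{k_j}$ at position $j$ of $U$ on support $S_j$, I would allocate a fresh ancilla register $A_j$ of $|S_j|$ qubits and, using CNOTs inserted at exactly that position, record the computational-basis value of $S_j$ at the moment the original $D^{k_j}$ was to act. The $D^{k_j}$'s are then deferred to act only on their own ancilla registers; since distinct $A_j$'s are disjoint, the deferred gates together form a single layer.

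Concretely, let $J\subseteq\{1,\ldots,I\}$ index the positions where $U_j=D^{k_j}$. Define $V$ to be the circuit obtained from $U$ by deleting every $D^{k_j}$ and inserting in its place the CNOT ladder $S_j\mapsto A_j$; let $U^{\mathrm{noD}}$ denote the ordered product of the non-$D^k$ gates of $U$; and let $P=\prod_{j\in J} D^{k_j}_{A_j}$. The compiled circuit is
\[
\tilde U \;=\; U^{\mathrm{noD}}\cdot V^{-1}\cdot P\cdot V,
\]
whose only $D^k$-type gates live in the single layer $P$.

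Correctness would then be verified by tracking a computational-basis input $\ket{x}_D\otimes\ket{0}_A$ through the four blocks. Write $g_i(x)$ for the bitstring after the first $i$ almost-classical gates of $U$ (which agrees with the trajectory under $U$ itself since the $D^{k_j}$'s are diagonal), $\phi_{\mathrm{noD}}(x)$ for the total phase accumulated by the non-$D^k$ gates, and $\phi_D(x)=\sum_{j\in J}\phi_{D^{k_j}}(g_{j-1}(x)|_{S_j})$ for the phases contributed by the $D^k$ gates. One checks that $V$ produces $e^{i\phi_{\mathrm{noD}}(x)}\ket{g_I(x)}_D\otimes\bigotimes_{j\in J}\ket{g_{j-1}(x)|_{S_j}}_{A_j}$; $P$ multiplies in $e^{i\phi_D(x)}$; $V^{-1}$ returns the state to $\ket{x}_D\otimes\ket{0}_A$ while stripping the factor $e^{i\phi_{\mathrm{noD}}(x)}$; and $U^{\mathrm{noD}}$ installs the data as $\ket{g_I(x)}_D$ and adds back $e^{i\phi_{\mathrm{noD}}(x)}$. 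The surviving scalar is exactly $e^{i\phi_U(x)}$, so $\tilde U\ket{x}_D\otimes\ket{0}_A=(U\ket{x}_D)\otimes\ket{0}_A$ on every computational basis input, hence on all inputs by linearity.

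The conceptually delicate step is the uncomputation block $V^{-1}$: one must be sure that after $P$ has acted, running $V^{-1}$ cleanly undoes both the bijection and the non-$D^k$ phase injected by $V$ without tampering with the scalar $e^{i\phi_D(x)}$ the $D^k$ layer has just deposited. This works precisely because $V$ by construction contains no $D^k$ gates, so the state after $P\cdot V$ differs from the state after $V$ only by the global scalar $e^{i\phi_D(x)}$, and global scalars commute with every unitary. Consequently the blocks $V$, $V^{-1}$, and $U^{\mathrm{noD}}$ use only CNOTs and non-$D^k$ almost-classical gates together with their inverses (which remain almost classical), and all $D^k$'s in $\tilde U$ are confined to the single layer $P$, as required.
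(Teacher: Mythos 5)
Your construction is correct, and it uses the same underlying mechanism as the paper's proof — record the computational-basis values into fresh ancilla registers via CNOTs at each point a $D^{k}$ gate would fire, defer all the $D^{k}$'s to act on the (disjoint) ancillas in a single layer, then uncompute — but you package it as an explicit one-shot formula $\tilde U = U^{\mathrm{noD}}\,V^{-1}\,P\,V$, whereas the paper (following Selinger's Theorem~4.1) builds the decomposition $U=A_1A_2$ gate by gate via induction, conjugating each non-$D^k$ gate through a diagonal $A_1$ and wrapping $A_1$ in a new ancilla/CNOT sandwich whenever a $D^k$ appears. Your flat compute/defer/uncompute layout is arguably cleaner and makes the invariant explicit: for a fixed basis input $x$ the phase $e^{i\phi_D(x)}$ deposited by $P$ is a scalar multiple of that basis state, so it slides freely past $V^{-1}$, and the rest of the verification is just bitstring bookkeeping. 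Both arguments hinge on exactly the two structural facts you isolate — $D^k$ is diagonal, so the bitstring trajectory $g_i(x)$ is unchanged by deleting the $D^k$'s, and the remaining gates are almost classical, so that trajectory is a well-defined classical bijection and the phases factor cleanly — so the proofs differ in presentation rather than in substance.
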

\begin{proof}
    The proof is essentially a generalization of the proof of Theorem 4.1 in \cite{selinger2013quantum}. Through induction, we decompose $U$ into $U=A_1A_2$, where $A_1$ is diagonal and with at most one layer of $D^k$ gates (the components of $A_1$ need not be diagonal; only the overall product needs to be diagonal). $A_2$ contains no $D^k$ gate. 

We now construct $A_1$ and $A_2$ by induction. We initialize $A_1$ and $A_2$ to the identities. In the $i$-th step, we have $A_1'$ and $A_2'$ from the previous step and apply $U_i$ to get $U_i A_1' A_2'$. Depending on the type of $U_i$ we perform the following actions.
\begin{enumerate}
    \item If $U_i$ is not a $D^k$ gate, then let $A_1 = U_i A_1' U_i^\dag$, $A_2=U_i A_2'$. Since $U_i$ is almost classical, if $A_1'$ is diagonal, then $A_1$ is also diagonal.
    \item If $U_i$ is not a $D^k$ gate, without loss of generality we assume $D^k$ applies to a single qubit $i$, then let $A_2=A_2'$ and let $A_1$ be
    \begin{equation}
        \scalebox{0.3}{\includegraphics{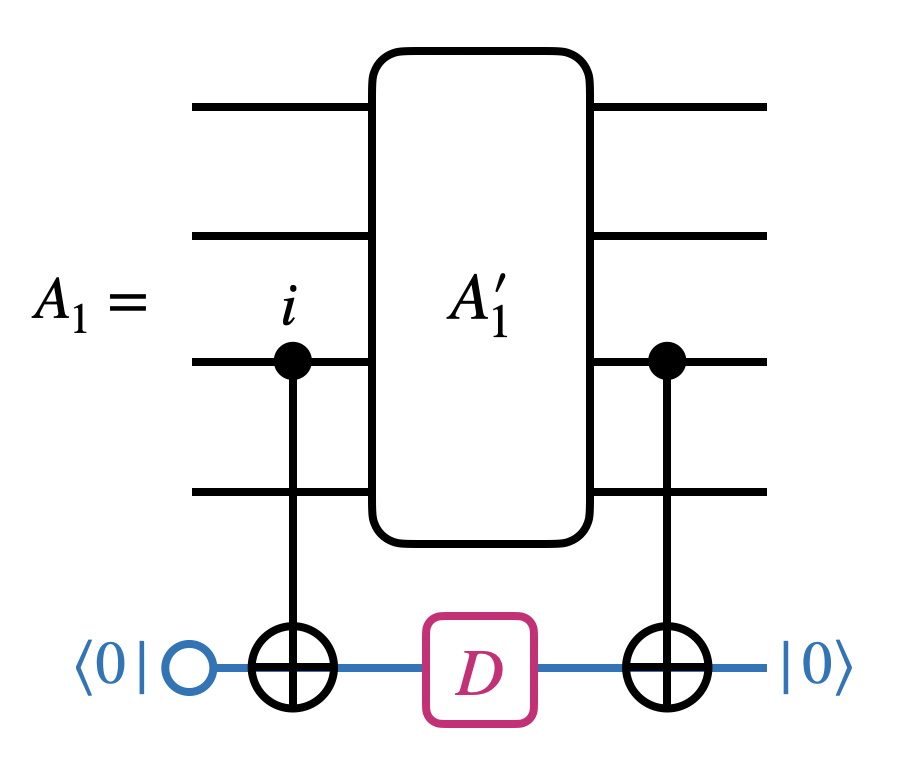}}
    \end{equation}
    Since $A_1'$ is diagonal, the above circuit is equivalent to applying $U_i A_1'$ and $A_1$ is also diagonal. Since $A_1'$ has $D$ depth one, $A_1$ also has $D$ depth one.
\end{enumerate}

    At the end of the induction, $A_1A_2$ has only one layer of $D^k$ gates in $A_1$.
\end{proof}

\subsection{Proof of Hardness}
We are now ready to show the hardness of computing amplitudes of $T$-depth-one circuits. We first show that any degree-three IQP circuit can be compiled to one layer of $T$ gates, after appending ancilla proportional to the number of diagonal gates.
\begin{lemma}
    Any degree-three IQP circuit with $d$ layers of diagonal gates can be compiled into one layer of T gates, after appending $O(nd)$ pure ancilla initialized to $\ket{0}$.
\end{lemma}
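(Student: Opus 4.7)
The plan is to chain together the three tools already developed in this section: the parallelization trick, the standard $CCZ = CNOT + T$ decomposition, and Lemma \ref{almost_classical_one_layer}. The resulting circuit structure is a sandwich of Hadamards enclosing an interior built entirely from almost classical gates, one of which ($T$) is diagonal, so Lemma \ref{almost_classical_one_layer} can be applied.

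First I would apply the parallelization trick to the $d$ diagonal layers of $D_{IQP3}$. Each layer consists of $Z$, $CZ$, and $CCZ$ gates acting on at most $O(n)$ qubits in total, so introducing one fresh $\ket{0}$ ancilla per qubit of support per layer copies all relevant bits in parallel, lets every $Z/CZ/CCZ$ in that layer act simultaneously on disjoint sets of ancillas, and then uncomputes the copies. Summed over the $d$ layers this costs $O(nd)$ ancillas and collapses all diagonal gates into a single simultaneous layer, sandwiched between the two Hadamard layers of the IQP circuit.

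Next I would replace each $CCZ$ in this single diagonal layer by its standard synthesis in terms of $CNOT$ and $T$ gates (Fig.\ \ref{fig:ccz}(a)). This introduces only $O(1)$ new $CNOT$ and $T$ gates per $CCZ$, so no additional ancillas yet, and the interior of the circuit (everything strictly between the two Hadamard walls) now consists solely of gates from $\{CNOT, Z, CZ, T\}$, each of which is almost classical in the sense of Definition \ref{def:almost_classical}, with $T$ the only diagonal non-Clifford element. I would then invoke Lemma \ref{almost_classical_one_layer} on this interior with $D = T$. The lemma pushes every $T^k$ into a single diagonal layer, paying one extra ancilla per $T^k$ processed; since the number of $T$ gates is $O(nd)$, this step contributes at most $O(nd)$ more ancillas.

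Finally, I would bound the totals and reinstate the outer Hadamard layers. The grand total ancilla count is $O(nd) + O(nd) = O(nd)$, and the resulting circuit has exactly one layer of $T$ gates in the interior, with everything else (Hadamards, $CNOT$s, $Z$, $CZ$, and the residual diagonal Clifford phases produced by Lemma \ref{almost_classical_one_layer}) being Clifford. The main subtlety I expect is purely bookkeeping: making sure that (i) the Hadamards are kept outside the region to which Lemma \ref{almost_classical_one_layer} is applied, since they are not almost classical, and (ii) when $Z$ or $CZ$ gates get conjugated by $CNOT$s during the inductive rewrite inside Lemma \ref{almost_classical_one_layer}, they remain diagonal Clifford and therefore do not contribute to the $T$-depth; both follow directly from the inductive invariants in the proof of Lemma \ref{almost_classical_one_layer}.
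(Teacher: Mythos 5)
Your proposal is correct and follows essentially the same three-step chain as the paper: parallelize the $d$ diagonal layers with $O(nd)$ ancillas, decompose each $CCZ$ into $CNOT$ and $T$ gates, and then use Lemma~\ref{almost_classical_one_layer} to push all $T^{\pm 1}$ gates into a single layer. The only cosmetic difference is that the paper invokes Lemma~\ref{almost_classical_one_layer} on each parallelized $CCZ$ block separately (they act on disjoint qubits, so the resulting $T$ layers align into one global layer, as shown in Fig.~\ref{fig:ccz}(b)), whereas you apply the lemma once to the whole interior between the Hadamard walls; both give the same $O(nd)$ ancilla count and the same conclusion, and your bookkeeping of the Hadamards staying outside the lemma's scope is exactly the right caution.
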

\begin{proof}\label{d3_iqp_td1}
    We parallelize all the diagonal gates using the trick shown in Fig.  \ref{fig:parallelization}. Next, the $CCZ$ gate can be decomposed into $CNOT$ gates and $T$ gates as shown in Fig. \ref{fig:ccz}(a). Then, using Lemma \ref{almost_classical_one_layer}, one can compile individual $CCZ$ gates to have $T$ depth one. For concreteness, we show the compilation of $CCZ$ with one layer of $T$ gates in Fig. \ref{fig:ccz}(b).
\end{proof}
After compiling the degree-three IQP circuit into one layer of $T$ gates, we can establish our first hardness result.
\begin{theorem}\label{thm:hardness_amplitude}
    Given a circuit with one layer of T gates $U=U_{c,r} (\prod_i T_i^{k_i}) U_{c,l}$, where $T_i$ acts on the qubit $i$ and $k_i$ denotes some integer power, then computing $\Re[\bra{0}U\ket{0}]$ up to a multiplicative error of $1$ is GapP-complete.
\end{theorem}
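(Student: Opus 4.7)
The plan is to prove GapP-completeness in two pieces: GapP-hardness by a direct reduction from the amplitude problem for degree-three IQP circuits, and GapP-containment by the standard Feynman path-integral argument.

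For GapP-hardness, I would take an arbitrary degree-three IQP circuit $V = H^{\otimes n} D_{IQP3} H^{\otimes n}$ and invoke Lemma \ref{d3_iqp_td1} to compile it, with $O(nd)$ ancilla initialized to $\ket{0}$, into a circuit $U = U_{c,r}(\prod_i T_i^{k_i}) U_{c,l}$ of the template form in the theorem. Since the parallelization trick returns every ancilla to $\ket{0}$, the amplitudes agree: $\bra{0} U \ket{0} = \bra{0} V \ket{0}$. Every gate in the IQP set $\{H, Z, CZ, CCZ\}$ is a real matrix in the computational basis, so $\bra{0} V \ket{0} \in \mathbb{R}$, and hence $\Re[\bra{0} U \ket{0}] = \bra{0} V \ket{0}$. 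A multiplicative-$1$ estimate of $\Re[\bra{0} U \ket{0}]$ is therefore a multiplicative-$1$ estimate of the IQP amplitude, and Proposition \ref{iqp_sharp_p} yields GapP-hardness. As the discussion preceding the theorem notes, a multiplicative-$1$ approximation already fixes the sign, which via binary search is in fact sufficient to reconstruct the amplitude exactly.

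For GapP-containment, I would invoke the standard path-integral expression: writing $U$ as a polynomial-size circuit over the fixed gate set $\{H, CNOT, S, T\}$ and expanding $\bra{0} U \ket{0}$ as a Feynman sum over computational-basis intermediates between elementary gates yields, after a fixed power-of-$\sqrt{2}$ rescaling, an integer-valued signed sum over exponentially many paths. Its real part is, by construction, the difference of the numbers of ``positive'' and ``negative'' paths of a nondeterministic polynomial-time machine, which is a GapP function by definition.

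The one point that requires care is checking that the compilation produces exactly the template $U_{c,r}(\prod_i T_i^{k_i}) U_{c,l}$. The inductive construction of Lemma \ref{almost_classical_one_layer} yields $U = A_1 A_2$ with $A_1$ diagonal and containing the single $T^k$ layer; any residual diagonal Cliffords ($Z$, $CZ$) inside $A_1$ arising from the Clifford$+T$ synthesis of each $CCZ$ commute with the $T^k$ gates and can be absorbed into the surrounding Clifford unitaries $U_{c,l}$ and $U_{c,r}$ without disturbing the single $T$ layer. This bookkeeping is the only potential source of friction; once it is in place, the two directions together establish GapP-completeness.
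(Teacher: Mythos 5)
Your proposal is correct and lands close to the paper's proof, but each direction has a wrinkle worth noting.

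On the hardness side you take exactly the paper's route: compile the degree-three IQP circuit via Lemma~\ref{d3_iqp_td1}, use the fact that the ancilla start and end in $\ket{0}$ so the amplitudes agree, and invoke Proposition~\ref{iqp_sharp_p}. You add one observation the paper leaves implicit: the gates $\{H,Z,CZ,CCZ\}$ are all real matrices in the computational basis, so $\bra{0}V\ket{0}\in\mathbb{R}$ and hence $\Re[\bra{0}U\ket{0}]$ really does equal the IQP amplitude. This is the right thing to say out loud, since the theorem is phrased in terms of the real part while Proposition~\ref{iqp_sharp_p} is phrased for the full (real-valued) amplitude. Your final paragraph, worrying about whether the compiled circuit has exactly the template form $U_{c,r}(\prod_i T_i^{k_i})U_{c,l}$, is also a legitimate check; your resolution is correct — everything that Lemma~\ref{almost_classical_one_layer} leaves outside the single $T^k$ layer is $\{CNOT, H\}$ on data plus ancilla, hence Clifford, and can be absorbed into $U_{c,l}$ and $U_{c,r}$. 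Note also that negative powers such as $T^{-1}$ from the $CCZ$ synthesis are still integer powers of $T$, so they fit the template.

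On the containment side you diverge slightly from the paper. The paper exploits the structure directly: it writes the stabilizer states $U_{c,l}\ket{0}$ and $\bra{0}U_{c,r}$ in the computational basis as sums over affine subspaces with quadratic-form phases (following \cite{dehaene2003clifford}), and then sums the real part of the contribution from each basis vector, noting the $T^{k_i}$ layer is diagonal. You instead invoke the generic Feynman path-integral over all intermediate basis states between elementary gates, with a $\sqrt{2}$ rescaling to make contributions integer-valued. Both arguments are valid and give GapP-containment; the paper's is a little more tailored (only needs one summation index $x\in\mathcal{K}\cap\mathcal{K}'$ rather than one per gate layer), while yours is the more generic and self-contained argument that applies to any polynomial-size circuit over a discrete gate set. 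Neither has a gap; they are just two standard ways to exhibit the GapP function.
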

\begin{proof}
This problem is in GapP because stabilizer states $U_{c,l} \ket{0}$ and $\bra{0} U_{c,r}$ can be written in the computational basis in $O(n^3)$ time \cite{dehaene2003clifford}. With such representations, one can compute $\Re[\bra{0}U_{c,r} (\prod_i T_i^{p_i}) U_{c,l}\ket{0}]$ by summing up all the real contributions from each basis vector which is in GapP.

To show the GapP-hardness, first use Lemma \ref{d3_iqp_td1} to compile any degree-three IQP circuit $H^{\otimes n}D_{IQP3}H^{\otimes n}$ to the T-depth-one circuit $U$. After the compilation, we have introduced some ancilla which are initialized in $\ket{0}_A$ and always return to $\ket{0}_A$ after the computation. Therefore, computing $\bra{0}_DH^{\otimes n}D_{IQP3}H^{\otimes n}\ket{0}_D$ is equivalent to computing $\bra{0}_D\bra{0}_A U\ket{0}_A\ket{0}_D$. Then, Proposition \ref{iqp_sharp_p} immediately implies the GapP-completeness. 
\end{proof}

Since a quantum computer is a sampling machine, one should really quantify the hardness of classically sampling the distribution. Since the hardness of sampling from degree-three IQP circuits is known under some plausible complexity conjectures, it follows that sampling $T$-depth-one circuits up to a small statistical error is also classically hard. We quantify this error using the total variation distance, defined as follows.
\begin{equation}\label{tvd}
    \delta(p(x),q(x)) = \frac{1}{2} \sum_x |p(x)-q(x)|
\end{equation}
Where $p(x)$ and $q(x )$ are the two probability distributions. \vtwo{The hardness of sampling from degree-three IQP circuits is quoted below.
\begin{proposition}\label{prop:iqp_samp_hardness}
    (Theorem 1 of \cite{bremner2016average}) If there exists a probabilistic classical polynomial-time algorithm to sample from the distribution of any degree-three IQP circuit up to a total variation distance of $\frac{1}{384}$, then under the complexity-theoretic Conjecture 3 in \cite{bremner2016average}, the polynomial hierarchy collapses to the third level.
\end{proposition}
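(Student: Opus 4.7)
The plan is to derive a contradiction with a polynomial-hierarchy collapse by combining Stockmeyer's approximate counting with the GapP-hardness of exact amplitude computation (Proposition \ref{iqp_sharp_p}). Suppose a probabilistic classical polynomial-time sampler $\mathcal{A}$ produces a distribution $q$ with $\delta(p,q) \le 1/384$, where $p(x) = |\bra{x} H^{\otimes n} D_{IQP3} H^{\otimes n} \ket{0}|^2$. Stockmeyer's theorem says that for any distribution $q$ generated by a polynomial-size randomized circuit, one can estimate $q(x)$ to multiplicative error $1/\text{poly}(n)$ in $\text{FBPP}^{\text{NP}}$. So after fixing any input $x$, we get an estimator $\tilde{q}(x)$ satisfying $|\tilde{q}(x)-q(x)| \le q(x)/\text{poly}(n)$, obtainable with an $\text{NP}$ oracle.

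Next I would convert this into a small-additive-error estimate of the quantum probability $p(x)$ for a random $x$. Since $\sum_x |p(x)-q(x)| \le 1/192$, Markov's inequality applied to $x$ drawn uniformly at random shows that for a constant fraction of $x \in \{0,1\}^n$, one has $|p(x)-q(x)| \le O(2^{-n})$; combining with Stockmeyer yields an $\text{FBPP}^{\text{NP}}$ additive estimator of $p(x)$ with error $O(2^{-n})$ for most $x$. Invoking the anti-concentration assumption (Conjecture 3 of \cite{bremner2016average}), namely that a constant fraction of degree-three IQP outputs $p(x)$ are $\Omega(2^{-n})$, the additive error becomes a small multiplicative error, so a constant fraction of $p(x)$ admit a $1/4$-multiplicative-error approximation in $\text{FBPP}^{\text{NP}}$.

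The final step is a worst-to-average-case reduction. The amplitudes of degree-three IQP circuits are polynomials of low degree in the circuit parameters (e.g.\ the $\pm 1$ signs specifying the $Z$, $CZ$, $CCZ$ placements, realized as inputs to a fixed template). A random self-reducibility argument along the lines of the permanent's Lipton-style interpolation then promotes average-case multiplicative approximation to worst-case multiplicative approximation. By Proposition \ref{iqp_sharp_p}, worst-case multiplicative approximation is GapP-complete, so its membership in $\text{BPP}^{\text{NP}} \subseteq \Delta_3\text{P}$ combined with Toda's theorem $\text{PH} \subseteq \text{P}^{\#\text{P}}$ forces $\text{PH}$ to collapse to $\Delta_3\text{P}$.

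The main obstacle is the worst-to-average reduction: Stockmeyer plus anti-concentration only gives hardness of approximation on a constant fraction of instances, and upgrading this to worst case requires a low-degree-polynomial self-correction specifically tailored to degree-three IQP circuits. Getting the constants to work out so that the combined slack — Markov's inequality, anti-concentration fraction, Stockmeyer multiplicative error, and the error budget of the self-correction — still permits a nontrivial multiplicative bound is what pins down the explicit $1/384$. Without Conjecture 3, only average-case hardness survives, which is insufficient for a $\text{PH}$ collapse.
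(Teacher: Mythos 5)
The paper does not prove this proposition; it simply cites it as Theorem 1 of Bremner, Montanaro, and Shepherd (2016). Your attempt to reconstruct that theorem's proof has the right skeleton in places (Stockmeyer counting, anti-concentration, Markov's inequality, reduction to $\#\text{P}$-hardness), but it contains a misidentification and a fatal gap that together invert the logical structure of the argument.

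First, you characterize Conjecture 3 of Bremner--Montanaro--Shepherd as the anti-concentration statement ``a constant fraction of $p(x)$ are $\Omega(2^{-n})$.'' That is not what Conjecture 3 is. For the degree-three IQP family, anti-concentration is \emph{proved unconditionally} in that paper (via the Paley--Zygmund inequality and a second-moment computation using the structure of degree-$\le 3$ polynomials over $\mathbb{F}_2$). Conjecture 3 is instead the \emph{average-case hardness} conjecture: that approximating $|\bra{0}H^{\otimes n}D_{IQP3}H^{\otimes n}\ket{0}|^2$ up to a constant multiplicative error is $\#\text{P}$-hard for a constant fraction of uniformly random degree-three IQP circuits.

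Second, and more seriously, your final step --- a ``random self-reducibility argument along the lines of the permanent's Lipton-style interpolation'' to promote average-case multiplicative approximation to worst-case --- is precisely the step that is \emph{not} known how to do, and it is exactly the content that the conjecture is papering over. Lipton's reduction for the permanent (and Berlekamp--Welch-style polynomial self-correction generally) does not tolerate constant relative error on a constant fraction of points; controlling the error blowup through the interpolation with only multiplicative guarantees on a constant fraction of the circuit ensemble remains open for IQP, and substantial subsequent work (Bouland--Fefferman--Nirkhe--Vazirani and follow-ups) has only partially resolved analogous questions for other random-circuit families. If your proposed self-correction step worked, Conjecture 3 would be a theorem and the entire conditional framing of the BMS result would be unnecessary. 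The correct logical flow is: Stockmeyer $+$ TVD bound $+$ Markov $+$ (proved) anti-concentration $\Rightarrow$ an $\text{FBPP}^{\text{NP}}$ algorithm approximating $p(x)$ to constant multiplicative error for a constant fraction of random circuits; then \emph{assume} (Conjecture 3) that this task is $\#\text{P}$-hard; Toda's theorem then collapses $\text{PH}$. Your version derives the hard step from scratch, which is not available.
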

}
We use the hardness of sampling degree-three IQP circuits to show that sampling from $T$-depth-one circuit classically would imply the collapse of the polynomial hierarchy.

\begin{theorem}
    Given a circuit with one layer of T gates $U=U_2 (\prod_i T_i^{k_i}) U_1$, where $T_i$ acts on the qubit $i$ and $k_i$ denotes some integer power, then under the Conjecture 3 from \cite{bremner2016average}, if there exists a classical algorithm to sample from $U\ket{0}$ in the computational basis up to a total variation distance of $\frac{1}{384}$, then the polynomial hierarchy collapses to the third level. \end{theorem}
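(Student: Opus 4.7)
The plan is to reduce sampling from degree-three IQP circuits to sampling from $T$-depth-one circuits, then invoke Proposition \ref{prop:iqp_samp_hardness}. The central ingredient is already in hand: Lemma \ref{d3_iqp_td1} shows that any degree-three IQP circuit $C = H^{\otimes n}D_{IQP3}H^{\otimes n}$ can, after appending $m = O(nd)$ ancilla qubits initialized to $\ket{0}$, be rewritten as a $T$-depth-one circuit $U$ such that
\begin{equation}
U\ket{0^n}_D\ket{0^m}_A = \bigl(C\ket{0^n}_D\bigr)\otimes\ket{0^m}_A,
\end{equation}
because the ancillas are cleaned at the end of the compilation. In particular, the output distribution $p(x,y) = |\bra{x,y}U\ket{0}|^2$ factorizes as $p_{IQP}(x)\cdot \mathbb{1}[y=0^m]$, where $p_{IQP}$ is the output distribution of $C$.

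Next I would use the standard fact that the total variation distance does not increase under marginalization: if $\tilde{q}(x,y)$ is any distribution with $\delta(\tilde{q},p) \le \tfrac{1}{384}$, then its marginal $\tilde{q}_D(x) = \sum_y \tilde{q}(x,y)$ satisfies
\begin{equation}
\delta(\tilde{q}_D, p_{IQP}) \;=\; \tfrac{1}{2}\sum_x\Bigl|\sum_y \tilde{q}(x,y) - \sum_y p(x,y)\Bigr| \;\le\; \tfrac{1}{2}\sum_{x,y}\bigl|\tilde{q}(x,y)-p(x,y)\bigr| \;\le\; \tfrac{1}{384}.
\end{equation}
Therefore, a hypothetical classical polynomial-time sampler that produces $(x,y)$ from a distribution within TVD $\tfrac{1}{384}$ of $p$ yields, by simply discarding the ancilla bits, a classical polynomial-time sampler that produces $x$ from a distribution within TVD $\tfrac{1}{384}$ of $p_{IQP}$. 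By Proposition \ref{prop:iqp_samp_hardness}, the existence of such a sampler for every degree-three IQP circuit implies, under Conjecture 3 of \cite{bremner2016average}, that the polynomial hierarchy collapses to its third level.

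The reduction is logically short and I do not expect any conceptual obstacle; the one point that merits care is the observation that the compilation in Lemma \ref{d3_iqp_td1} is instance-dependent but deterministic and polynomial-time, so the composed reduction (compile $C$, call the sampler for $U$, discard ancillas) runs in classical polynomial time, which is exactly what Proposition \ref{prop:iqp_samp_hardness} requires for its conclusion.
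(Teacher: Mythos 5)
Your proposal is correct and follows essentially the same reduction as the paper: compile the degree-three IQP circuit to a $T$-depth-one circuit via Lemma \ref{d3_iqp_td1}, observe that the ancillas return to $\ket{0}$ so the output distribution factorizes, and invoke Proposition \ref{prop:iqp_samp_hardness}. The only difference is that you make explicit the data-processing step (TVD is non-increasing under marginalization, so discarding the ancilla bits preserves the $\tfrac{1}{384}$ bound), which the paper states implicitly; this is a welcome bit of extra rigor but not a different approach.
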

\begin{proof}
We again use Lemma \ref{d3_iqp_td1} to compile any degree-three IQP circuit to the T-depth-one circuit $U$. We know that the ancilla are initialized in $\ket{0}_A$ and always return to $\ket{0}_A$ after the computation. Therefore, sampling from $U \ket{0}_D \ket{0}_A$ results in the distribution of all zeros on the ancilla qubits and the IQP distribution on the data qubits. Thus, the hardness of sampling from $T$-depth-one circuits follows from Proposition \ref{prop:iqp_samp_hardness}.
\end{proof}
\begin{figure}
\includegraphics[width=\linewidth]{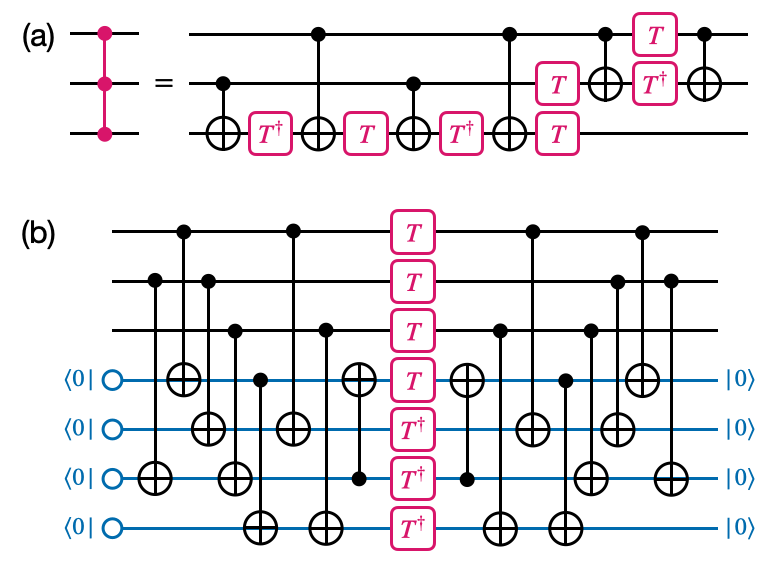}
\caption{\label{fig:ccz}\textbf{Decomposition of the $CCZ$ gate.} (a) Decomposing $CCZ$ gate into $CNOT$ gates and $T^{\pm 1}$ gates. (b) Compiling the circuit in (a) into one layer of $T^{\pm 1}$ gates. }
\end{figure}

In this section, we have shown that even if we restrict the magic gates to be $T$ gates, computing amplitude still remains hard. Such restriction, however, has non-trivial consequences for other computational tasks. We will see later on that restricting magic gates to $T$ gates renders the computation of Pauli observable classically efficient due to the special property of $T$ gates, whereas for more generic magic gates, computing Pauli observable still remains classically hard.
\section{Computing Pauli Observable in Shallow Magic Depth Circuits}
\subsection{Exact Computation of Pauli Observable in $T$ Depth 1}
After seeing the hardness of computing the amplitude of $T$-depth-one circuits, we switch our task to computing Pauli observable (Fig. \ref{fig:td1_setup_iqp}(b)). 
Surprisingly, computing Pauli observable in $T$-depth-one circuits is classically easy. This is because $T$ gate belongs to the third level of the Clifford hierarchy and possesses some special symmetry.
\begin{definition}
 We define the first level of the Clifford Hierarchy $\mathcal{CH}_1$ as the Pauli group. The $l$-th level of the Clifford Hierarchy $\mathcal{CH}_l$ is defined as a collection of gates $U$ satisfying the following property: for any Pauli operator $P$, $U P U^\dag$ is in the  $(l-1)$-th level of the Clifford Hierarchy $\mathcal{CH}_{l-1}$.
\end{definition}
Following the above definition, the second level $\mathcal{CH}_2$ is the Clifford group, and the third level of the Clifford Hierarchy $\mathcal{CH}_{3}$ contains gates $U$ such that $U P U^\dag$ is in the Clifford group. Both the $T$ gate and the $CCZ$ gate are in $\mathcal{CH}_{3}$. Notably, unlike the Pauli group or the Clifford group, $\mathcal{CH}_{3}$ does not form a group and gives a sufficient gate set for universal quantum computation. The current magic state injection protocols also teleport gates from $\mathcal{CH}_{3}$ using only Clifford operations \cite{gottesman1999demonstrating,knill2004fault,bravyi2005universal}. This is possible exactly because $U P U^\dag$ is in the Clifford group.

The above discussion shows that $\mathcal{CH}_{3}$ is powerful enough yet possesses special structures to exploit. In the context of magic-depth-one circuits, we show that when the magic gates are from $\mathcal{CH}_{3}$, then computing Pauli observable becomes classically efficient.

\begin{theorem}\label{thm:3ch_pauli}
    Given a circuit with one layer of non-Clifford gate $U= \prod_i (\tilde{U}_i) U_{c,l}$, where each $\tilde{U}_i$ is in the third level of the Clifford hierarchy $\mathcal{CH}_{3}$, then there exists a classical algorithm to compute any Pauli observable $\braket{P}=\bra{0} U^\dag P U \ket{0}$ in $O(n^3)$ time.
\end{theorem}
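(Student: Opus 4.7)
The plan is to leverage the defining property of $\mathcal{CH}_3$ — that conjugating any Pauli by a $\mathcal{CH}_3$ gate produces a Clifford operator — to turn the Pauli expectation value into an overlap between two stabilizer states, which is classically computable in $O(n^3)$.

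First I would rewrite the target as $\bra{0} U^\dagger P U \ket{0} = \bra{\phi} W \ket{\phi}$, where $\ket{\phi} := U_{c,l}\ket{0}$ is a stabilizer state with an efficient $O(n^2)$ tableau representation, and $W := V^\dagger P V$ with $V := \prod_i \tilde U_i$. The central claim to establish is that $W$ is itself a Clifford unitary on $n$ qubits. Since the local $\tilde U_i$ in a single magic layer act on disjoint sets of qubits $S_i$, they pairwise commute, and $P$ splits along these supports as $P = P_{\text{rest}} \otimes \bigotimes_i P_{S_i}$ (up to an overall sign from reordering). Consequently
\begin{equation}
W \;=\; P_{\text{rest}} \otimes \bigotimes_i \bigl(\tilde U_i^\dagger P_{S_i} \tilde U_i\bigr) \;=\; P_{\text{rest}} \otimes \bigotimes_i C_i,
\end{equation}
with each $C_i$ a Clifford unitary on $|S_i|=O(1)$ qubits by the definition of $\mathcal{CH}_3$. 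A tensor product of local Cliffords is Clifford, so $W$ is Clifford.

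Next I would apply $W$ to $\ket{\phi}$ inside the tableau formalism. Each $C_i$ is an $O(1)$-qubit Clifford, so it can be synthesized into an $O(1)$-length sequence of elementary Clifford gates by a table lookup or by running Clifford synthesis on its $O(1)\times O(1)$ matrix. Updating the stabilizer tableau of $\ket{\phi}$ under one such elementary gate costs $O(n)$, so applying all $C_i$'s and the residual Pauli factor $P_{\text{rest}}$ costs $O(n^2)$ in total and produces the tableau of a new stabilizer state $\ket{\phi'} := W\ket{\phi}$. The expectation value is then $\braket{\phi | \phi'}$, the inner product of two stabilizer states, which can be evaluated in $O(n^3)$ by standard symplectic Gaussian elimination, dominating the overall runtime.

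The main conceptual obstacle is resisting the temptation to expand $W$ in the Pauli basis: each $C_i$ is a sum of up to $4^{|S_i|}$ Paulis, and the tensor product over $\Theta(n)$ magic gates would give exponentially many Pauli terms, each individually easy against a stabilizer state but intractable in aggregate. Keeping $W$ in its factored, Clifford form — and manipulating it only through its tableau action rather than through Pauli decomposition — is what makes the argument go through, and it is precisely this step that will fail in $T$-depth two or for $T^{1/2}$, where the local conjugates are only $\mathcal{CH}_2$-in-$\mathcal{CH}_3$ combinations and no longer Clifford.
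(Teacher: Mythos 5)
Your approach is essentially the paper's: conjugate $P$ through the $\mathcal{CH}_3$ layer to turn it into a product of local Hermitian Cliffords $C_i = \tilde U_i^\dagger P_{S_i} \tilde U_i$, then reduce to an overlap computation between stabilizer states, costing $O(n^3)$.

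There is one point you gloss over that the paper explicitly flags. You say that updating the tableau of $\ket{\phi}$ under the $C_i$ produces ``the tableau of $\ket{\phi'}=W\ket{\phi}$,'' and then that $\braket{\phi|\phi'}$ follows from ``standard symplectic Gaussian elimination.'' But a stabilizer tableau in the usual Aaronson--Gottesman sense fixes a state only up to a global phase, and the standard inner-product routine returns $|\braket{\phi|\phi'}|$, not the signed/complex overlap that the theorem requires (the task is computing $\langle P\rangle \in [-1,1]$, so the sign is the whole point). The paper stresses precisely this: ``computing the amplitude and keeping track of the phase takes a bit of extra work,'' and it resolves it by representing $U_{c,l}\ket{0}$ in the affine-subspace-plus-quadratic-form (CH-type) picture, where applying each Hermitian Clifford $C_i$ --- whose overall phase is uniquely pinned down by the conjugation $\tilde U_i^\dagger P_{S_i} \tilde U_i$ --- propagates the quadratic form with a well-defined constant term, so the final overlap is obtained with its correct sign via the Bravyi--Gosset inner-product routine. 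Your proof would need the same phase-aware bookkeeping (when synthesizing each $C_i$ from its $O(1)\times O(1)$ matrix you must preserve its actual phase, and you must carry a scalar phase alongside the tableau). This is a repairable omission rather than a wrong idea, but as written the last step does not yield the quantity the theorem asks for.
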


\begin{proof}
We prove by giving the classical algorithm explicitly. The key observation, visualized in Fig. \ref{fig:3ch_4ch}(a), is that after evolution of $\prod_i (\tilde{U}_i)$, $P$ becomes a product of local Clifford operators with a particular phase to ensure herminicity.
\begin{align}
 \prod_i (\tilde{U}_i^\dag)  P \prod_i (\tilde{U}_i)  &= \prod_i U_{ci} \\
 U_{ci} &= \tilde{U}_i^\dag P_{\text{supp}(i)} \tilde{U}_i
\end{align}
Where $P_{\text{supp}(i)}$ is the part of $P$ on the support of $\tilde{U}_i$, and $U_{ci}= \tilde{U}_i^\dag P_{\text{supp}(i)} \tilde{U}_i$ is the local Hermitian Clifford operator generated by $\tilde{U}_i$. After the evolution, the problem of computing Pauli observable becomes evaluating the amplitude of a Clifford unitary, shown in Fig. \ref{fig:3ch_4ch}(b), under a particular phase convention of $U_{ci}$ (the phase of $U_{c,l}$ can be chosen arbitrarily as $U_{c,l}^\dag$ cancels the phase out).

While computing the squared amplitude of a Clifford circuit is well known \cite{aaronson2004improved}, computing the amplitude and keeping track of the phase takes a bit of extra work. We will employ the technique of \cite{bravyi2016improved}. We write $U_{c,l}\ket{0}$ in the computational basis, following \cite{dehaene2003clifford}:
\begin{equation}\label{clifford_comp_basis}
    U_{c,l}\ket{0} = \sum_{x \in \mathcal{K}} e^{i\frac{\pi}{4} q(x)} \ket{x}
\end{equation}
where $\mathcal{K} \subseteq \mathbb{F}_2^n$ denotes an affine subspace and $q(x): \mathcal{K} \rightarrow \mathbb{Z}_8$ denotes a quadratic form (we follow the notation in \cite{bravyi2016improved}). One can choose an arbitrary sign convention for $U_{c,l}\ket{0}$ as it will be cancelled out later in the inner product. Next, we compute $ (\prod_i U_{ci}) U_{c,l}\ket{0}$. It can again be written in the computational basis:
\begin{equation}
    (\prod_i U_{ci})U_{c,l}\ket{0} = \sum_{x \in \mathcal{K'}} e^{i\frac{\pi}{4} q'(x)} \ket{x}
\end{equation}
Crucially, the sign convention of $ (\prod_i U_{ci}) U_{c,l}\ket{0}$, in other words, the constant term in the quadratic form $q'(x)$, is completely fixed by $U_{c,l}\ket{0}$ and $(\prod_i U_{ci})$. See \cite{garcia2014hybrid} for the action of the Clifford gates in the computational basis. With $\mathcal{K}$, $q(x)$, $\mathcal{K}'$, and $q'(x)$, one can calculate the inner product $\bra{0} U_{c,l}(\prod_i U_{ci})U_{c,l}\ket{0}$ in $O(n^3)$ time, using the algorithm in Appendix C of \cite{bravyi2016improved}.
\end{proof}

\begin{figure*}
\includegraphics[width=\linewidth]{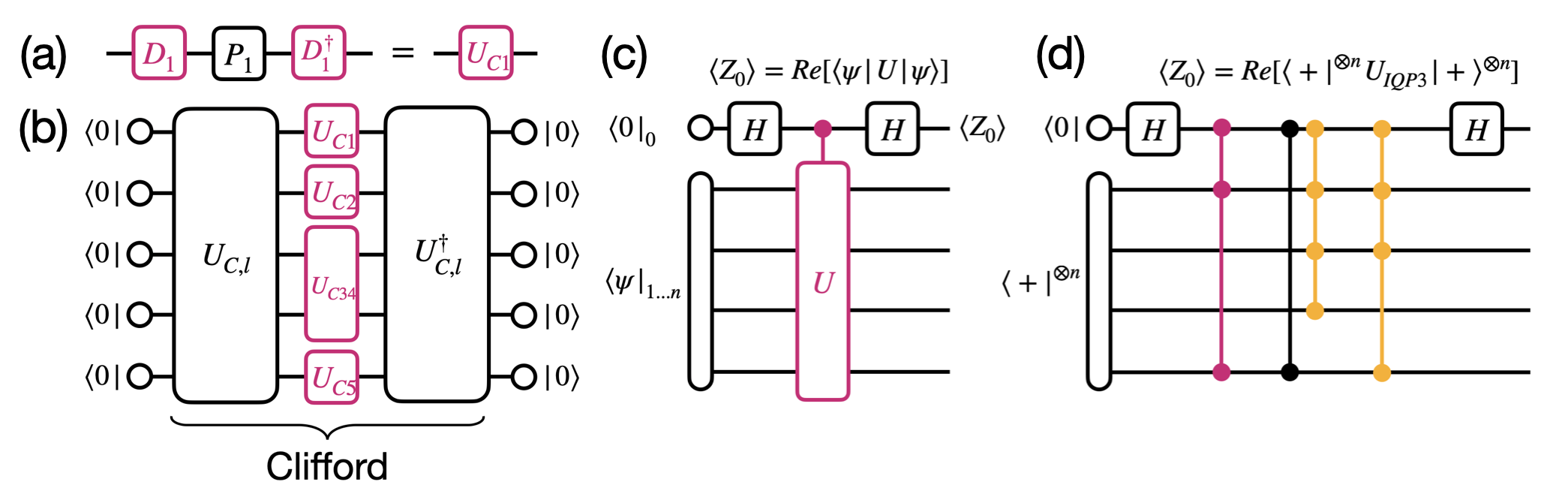}
\caption{\label{fig:3ch_4ch}\textbf{Computing Pauli observable and the Hadamard test.} (a) Evolving Pauli operators with gates from $\mathcal{CH}_3$ turn them into Hermitian Clifford operator.  (b) The Hadamard test reduces computing ampiltudes to computing Pauli observable. (c) The Hadamard test that compute the amplitude of the degree-three IQP circuit shown in Fig. \ref{fig:td1_setup_iqp}(c). $CCZ$ gates are in magenta and $CCCZ$ gates are in yellow color.}
\end{figure*}
\subsection{Hardness of Computing Pauli Observable in Magic Depth 2}
After seeing a classical polynomial-time algorithm to compute Pauli observables in $T$-depth-one circuits, one may ask how far this result can be extended. For instance, can Pauli observables in $T$-depth-two circuits still be classically computed? In addition, if one replaces the $T$ gates with more generic phase gates, does the classical algorithm still hold? We give negative answers to both questions. Specifically, we show that by having either (1)  magic gates from the fourth level of the Clifford hierarchy $\mathcal{CH}_4$ or (2) two layers of $T$ gates, computing Pauli observables becomes GapP-hard.

On a high level, we show the hardness using the Hadamard test, which reduces the task of computing amplitudes to the task of computing Pauli observables. We next show that with gates from $\mathcal{CH}_4$ or two layers of $T$ gates, one can synthesize Hadamard tests that compute the amplitude of any degree-three IQP circuits. Therefore, the hardness of computing Pauli observable follows from the hardness of computing amplitudes of degree-three IQP circuits.

\subsubsection{Hadamard Test}

We first describe the technique of the Hadamard test, which allows us to reduce computing amplitudes to computing Pauli observables. The Hadamard test is shown in Fig. \ref{fig:3ch_4ch}
(c). The circuit contains a clean ancilla $\ket{0}_0$ and an arbitrary initial state $\ket{\psi}_{1\ldots n}$. A Hadamard gate is first applied to the clean ancilla; then a controlled-$U$ gate is applied between the ancilla and the state; finally, a Hadamard gate is applied to the ancilla again. After the circuit, the state evolves to
    \begin{equation}
        \frac{1}{2} \ket{0}_0 (\ket{\psi} - U\ket{\psi})_{12...n} + \frac{1}{2}\ket{1}_0 (\ket{\psi} + U\ket{\psi})_{12...n}
    \end{equation}
One can explicitly verify that $\langle Z_0 \rangle = Re[\bra{\psi}U\ket{\psi}]$, thus evaluating Pauli observable in this circuit allows one to compute the amplitude $Re[\bra{\psi}U\ket{\psi}]$. By setting $\ket{\psi} = \ket{+}^n$ and $U=D_{IQP3}$, computing the amplitude of a degree-three IQP circuit reduces to evaluating the Pauli observable of a Hadamard test. Thus, one would expect that computing the Pauli observable of a Hadamard test is hard.

Since $D_{IQP3}$ contains $Z$, $CZ$, and $CCZ$ gates, one can explicitly construct controlled-$D_{IQP3}$ by replacing each with $CZ$, $CCZ$, and $CCCZ$ gates. Crucially, the $CCCZ$ gate is in the fourth level of the Clifford hierarchy $\mathcal{CH}_4$. This means that the previous algorithm, which relies on the property of $\mathcal{CH}_3$, does not apply to the Hadamard test of the degree-three IQP circuit. 

The remaining task is to show that the controlled-$D_{IQP3}$ can be decomposed into one layer of some magic gates or two layers of $T$ gates. Since the controlled-$D_{IQP3}$ consists of diagonal gates, they can be parallelized to one layer, where $CZ$ is Clifford and $CCZ$ can be compiled to one layer of $T$ gates. Therefore, one has to primarily concern about the decomposition of $CCCZ$ gates.

In the next subsection, we will show that
\begin{enumerate}
    \item $CCCZ$ gate can be compiled to one layer of $T^{\pm \frac{1}{2}}$ gates which is also in $\mathcal{CH}_4$. $T^{\pm \frac{1}{2}}$ is defined as
\begin{equation}
    T^{\pm \frac{1}{2}} = \ketbra{0}{0} \pm e^{i \frac{\pi}{8}}\ketbra{1}{1}
\end{equation}
One can see that $T^{\pm \frac{1}{2}}$ rotates along the z-axis with a smaller angle, and applying $T^{\pm \frac{1}{2}}$ twice gives $T^{\pm 1}$. 
    \item $CCCZ$ gate can be compiled to two layers of $T$ gates.
\end{enumerate}

To quickly see the results, we explicitly show the two decompositions in Fig. \ref{fig:cccz}. In Fig. \ref{fig:cccz}(a), we decompose a $CCCZ$ gate into two layers of $CCZ$ and $CS$ gates, separated by the $X^{-\frac{1}{2}}$ gate (shown in green). A $CS$ gate is defined as
\begin{equation}
    CS = \ketbra{00}{00} + \ketbra{01}{01} + \ketbra{10}{10} + i \ketbra{11}{11}
\end{equation}
The $CS$ gate also belongs to $\mathcal{CH}_3$. In the next section, we will show that $\mathcal{CH}_3$ can also be decomposed into $CNOT$ and $T$ gates. Thus, using Lemma \ref{almost_classical_one_layer}
, we can compile two layers of $CCZ$ and $CS$ gates separately into two layers of $T$ gates. Notice that $X^{-\frac{1}{2}}=H S^{-1} H$ is not almost classical. Here, $S=\ketbra{0}{0}+i\ketbra{1}{1}$ is the Clifford phase gate. Thus, one cannot apply Lemma \ref{almost_classical_one_layer} to both layers of $CCZ$ and $CS$ gates together. That is why we need two layers of $T$ gates to synthesize a $CCCZ$ gate.

In Fig. \ref{fig:cccz}(b), we decompose a $CCCZ$ gate into products of $CNOT$ and $T^{\pm \frac{1}{2}}$. Then, applying Lemma \ref{almost_classical_one_layer}, we compile the circuit to put $T^{\pm \frac{1}{2}}$ to one layer.

In fact, we will establish two generic results concerning (1) generating all diagonal gates in $\mathcal{CH}_l$ using one layer of small-angle rotations in $\mathcal{CH}_l$ and (2) generating multi-controlled $Z$ gate $C^{l-1}Z$ using two layers of gates from $\mathcal{CH}_m$, where $m<l$. The two decompositions of the $CCCZ$ gate follow as special cases.

\subsubsection{Decomposing the Clifford Hierarchy}
In this subsection, we provide two results regarding the synthesis of diagonal gates in the Clifford hierarchy. We focus on the subset of $\mathcal{CH}_l$ that are diagonal gates because they are more structured. It is known that this diagonal subset, which we denote as $\mathcal{D}_l$, forms a group \cite{cui2017diagonal} (to reiterate, $\mathcal{CH}_l$ is not a group for $l\ge 3$ and provides a complete gate set for universal quantum computation). $\mathcal{D}_l$ is generated by a set of controlled-phase gates.
\begin{proposition}
Denote the diagonal subset of $\mathcal{CH}_l$ as $\mathcal{D}_l$. $\mathcal{D}_l$ forms a group and is generated by $C^k Z^{2^{k-l+1}}$, $k=0 \ldots l-1$. $C^k Z^{2^{k-l+1}}$ acts on $k+1$ qubits and is defined as
\begin{equation}
    C^k Z^{2^{k-l+1}} = \sum_{x=0}^{2^{k+1}-2} \ketbra{x}{x} + e^{i \pi{2^{1+k-l}}} \ketbra{2^{k+1}-1}{2^{k+1}-1}
\end{equation}
Where $x$ denotes the literal value of the $k$-bit bitstring.
\end{proposition}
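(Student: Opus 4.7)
The plan is to characterize $\mathcal{D}_l$ through its phase polynomial and then match that characterization against the phase polynomials of the proposed generators. Every diagonal unitary admits a unique representation $D=\sum_x e^{i\phi(x)}\ket{x}\bra{x}$ with $\phi(x)=\sum_{S\subseteq[n]}c_S\prod_{i\in S}x_i$, and the generator $C^{k}Z^{2^{k-l+1}}$ acting on an index set $S$ of size $k+1$ contributes exactly the single monomial $\pi\,2^{k-l+1}\prod_{i\in S}x_i$. Hence the group generated by all $C^{k}Z^{2^{k-l+1}}$ (over all qubit choices and all $k$) is precisely the set of diagonal unitaries whose coefficients satisfy $c_S\in\pi\,2^{|S|-l}\mathbb{Z}$, and the proposition reduces to showing that this set coincides with $\mathcal{D}_l$.

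For the forward direction (each proposed generator lies in $\mathcal{D}_l$) I would induct on $l$, starting from $Z\in\mathcal{CH}_1$. The inductive step rests on the identity
\begin{equation}
C^{k}Z^{\alpha}\,X_1\,C^{k}Z^{-\alpha}=X_1\cdot(C^{k-1}Z^{\alpha})_{2,\dots,k+1}\cdot(C^{k}Z^{-2\alpha})_{1,\dots,k+1},
\end{equation}
verified by a direct computational-basis calculation. Substituting $\alpha=2^{k-l+1}$ turns the two diagonal factors on the right into, respectively, the $(k{-}1)$-generator at level $l-1$ and the inverse of the $k$-generator at level $l-1$, so the conjugate lies in $\mathcal{CH}_{l-1}$ by the inductive hypothesis; conjugation by any $Z_j$ is trivial since the generators are diagonal. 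Combined with the closure of $\mathcal{D}_l$ under products established in~\cite{cui2017diagonal}, this gives the group property and the forward containment.

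For the reverse direction I would again induct on $l$, peeling off one layer of the phase polynomial at a time. Given $D\in\mathcal{D}_l$ and a Pauli $X_j$, direct computation yields $D\,X_j\,D^\dagger = X_j\,\tilde D_j$ where $\tilde D_j$ is diagonal with phase equal to the discrete derivative $\phi(y\oplus e_j)-\phi(y)=\sum_{S\ni j}c_S(1-2y_j)\prod_{i\in S\setminus\{j\}}y_i$. The defining condition $X_j\tilde D_j\in\mathcal{CH}_{l-1}$ forces $\tilde D_j\in\mathcal{D}_{l-1}$, so the inductive hypothesis pins down every coefficient $c_S$ with $j\in S$; letting $j$ range over $[n]$ fixes all non-constant coefficients and shows $c_S\in\pi\,2^{|S|-l}\mathbb{Z}$. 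Each monomial $c_S\prod_{i\in S}x_i$ with $c_S=m_S\cdot\pi\,2^{|S|-l}$ is then realized by $m_S$ applications of $C^{|S|-1}Z^{2^{|S|-l}}$ on the qubits of $S$, reassembling $D$ up to an inert global phase.

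The main obstacle is the separation step in the reverse direction: concluding $\tilde D_j\in\mathcal{D}_{l-1}$ from $X_j\tilde D_j\in\mathcal{CH}_{l-1}$ cannot be done by appealing to multiplicative closure of $\mathcal{CH}_{l-1}$, which fails for $l\ge 4$. The correct argument relies on the semidirect-product structure relating the Pauli and diagonal components of the Clifford hierarchy, worked out in~\cite{cui2017diagonal}; in a written proof I would invoke that structural decomposition rather than re-derive it.
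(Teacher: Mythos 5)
The paper does not prove this proposition; it states it as known and cites Cui, Gottesman, and Krishna~\cite{cui2017diagonal}, from whose characterization of the diagonal Clifford hierarchy the statement follows. Your proposal is therefore a reconstruction of that reference's argument rather than an alternative to anything in the paper: you characterize $\mathcal{D}_l$ by the divisibility condition $c_S\in\pi\,2^{|S|-l}\mathbb{Z}$ on the phase-polynomial coefficients, verify both inclusions by induction on $l$, and note that the coefficient-constrained set is manifestly a group. The conjugation identity you use in the forward direction checks out on the computational basis, and the reverse-direction discrete-derivative computation correctly transfers the level-$(l{-}1)$ divisibility constraint to the level-$l$ coefficients.

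One substantive remark: the step you single out as ``the main obstacle'' is not actually an obstacle, and you do not need any semidirect-product structure from the reference to handle it. You wish to conclude $\tilde D_j\in\mathcal{CH}_{l-1}$ from $X_j\tilde D_j\in\mathcal{CH}_{l-1}$, and you worry that this requires multiplicative closure of $\mathcal{CH}_{l-1}$, which indeed fails in general. But closure under \emph{Pauli} multiplication does hold at every level: for $V\in\mathcal{CH}_m$ and a Pauli $P_0$ one has $(P_0V)P(P_0V)^\dagger = P_0\,(VPV^\dagger)\,P_0^\dagger$, and since conjugation by a Clifford (in particular a Pauli) preserves membership in $\mathcal{CH}_{m-1}$ (a one-line induction on $m$), it follows that $P_0V\in\mathcal{CH}_m$. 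Writing $\tilde D_j = X_j\cdot\bigl(X_j\tilde D_j\bigr)$ then gives $\tilde D_j\in\mathcal{CH}_{l-1}$ directly, and being diagonal it lies in $\mathcal{D}_{l-1}$. With this observation your argument becomes self-contained and does not need to import the structural decomposition from~\cite{cui2017diagonal}.
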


As an example, $\mathcal{D}_3$ is generated by $T$, $CS$, and $CCZ$ gates. If we set $k=0$, the single-qubit phase gate $Z^{2^{1-l}}$ in $\mathcal{D}_l$ rotates $\ket{1}$ by a phase $e^{i \pi2^{1-l}} $ which is exponentially small in $l$. In other words, higher Clifford hierarchies contain rotations with smaller angles.

We now show that the single-qubit phase gate $Z^{2^{1-l}}$, together with the $CNOT$ gate, is already enough to generate any gate in $\mathcal{D}_l$. Moreover, $Z^{2^{1-l}}$ can be placed in one layer.
\begin{theorem}\label{ch_synthesis}
    $C^k Z^{2^{k-l+1}}$ can be synthesized from $CNOT$ gates and one layer of $Z^{2^{1-l}}$ gates or its inverse after appending ancilla qubits for all $k=0 \ldots l-1$.
\end{theorem}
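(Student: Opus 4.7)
The plan is to exploit the standard integer (Reed--Muller) identity
$$\prod_{i=0}^{k} x_i \;=\; \frac{1}{2^k} \sum_{\emptyset \neq S \subseteq \{0,1,\ldots,k\}} (-1)^{|S|+1} \bigoplus_{j \in S} x_j,$$
which can be verified by induction on $k$ (or by inclusion--exclusion applied to the indicator that all $x_i = 1$). Because $C^{k}Z^{2^{k-l+1}}$ imparts the phase $\exp(i\pi\, 2^{k-l+1} \prod_i x_i)$ to a computational basis state $\ket{x_0 x_1 \cdots x_k}$, substituting the identity rewrites this phase as
$$\prod_{\emptyset \neq S} \exp\!\left(i\pi\, 2^{1-l}(-1)^{|S|+1}\bigoplus_{j \in S} x_j\right),$$
a product of single-qubit rotations of angle $\pm\pi\, 2^{1-l}$ acting on the parity of each subset $S$. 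The factor $2^{k-l+1}$ in the original gate cancels precisely against the $1/2^k$ of the expansion, producing the uniform target angle $2^{1-l}$ regardless of $k$. This is the structural reason that the same rotation $Z^{2^{1-l}}$ suffices across all $k$.

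Next, I would realize this phase decomposition as a circuit. For each nonempty $S \subseteq \{0,\ldots,k\}$, allocate a fresh $\ket{0}$ ancilla, apply $|S|$ CNOTs controlled on the data qubits in $S$ and targeted at the ancilla to compute $\bigoplus_{j \in S} x_j$ into it, apply $(Z^{2^{1-l}})^{(-1)^{|S|+1}}$ to that ancilla, and then reverse the CNOTs to uncompute the parity and return the ancilla to $\ket{0}$. Since the ancillas for different $S$ are disjoint, the full collection of phase rotations commutes and can be executed simultaneously, which is exactly the ``one layer of $Z^{2^{1-l}}$ gates or its inverse'' required by the theorem. The ancilla overhead is $2^{k+1}-1$, which is a constant for each fixed $k$ (and in particular for the $k \le 3$ cases used later in the paper for $CCCZ$).

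The main obstacle is bookkeeping: establishing the integer identity with the correct $(-1)^{|S|+1}$ sign pattern and confirming that the cancellation of the powers of two yields exactly the angle $2^{1-l}$. Once that is in place, the compilation is mechanical. As a more concise alternative, one can instead decompose $C^{k}Z^{2^{k-l+1}}$ recursively via the phase-halving identity
$$CZ^{\theta} \;=\; (Z^{\theta/2}\otimes Z^{\theta/2})\cdot CNOT\cdot(I\otimes Z^{-\theta/2})\cdot CNOT,$$
iterated $k$ times to reduce the number of controls, producing a circuit composed solely of CNOTs and $Z^{2^{1-l}}$ (and its inverse). Both gates are almost classical in the sense of Definition \ref{def:almost_classical}, so Lemma \ref{almost_classical_one_layer} immediately compiles all $Z^{2^{1-l}}$ rotations into a single layer, again at the cost of ancillas introduced by the lemma's construction. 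Either route yields the claimed synthesis.
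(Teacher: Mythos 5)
Your proposal is correct and takes essentially the same route as the paper: the inclusion--exclusion/Reed--Muller expansion you write down is exactly the Walsh--Hadamard transform of the AND indicator that the paper invokes via Lemma~\ref{small_angle_synthesis} (from \cite{schuch2003programmable}), and the key cancellation $2^{k-l+1}\cdot 2^{-k} = 2^{1-l}$ giving uniform angles $\pm\pi\,2^{1-l}$ is the same observation made in the paper's proof. The only cosmetic difference is that you build the single layer directly by dedicating one ancilla per parity, whereas the paper first produces a sequential circuit of CNOTs and $Z^{2^{1-l}}$ rotations and then invokes Lemma~\ref{almost_classical_one_layer} to flatten it; your ``phase-halving'' alternative at the end is just the recursive form of the same expansion (peeling off one control at a time and halving the angle), so it is also valid, though it would benefit from spelling out that the halved-angle gate acting on a parity is realized by temporarily CNOT-ing one control onto another before recursing.
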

\begin{proof}
    We apply a result from \cite{schuch2003programmable} which gives a procedure to synthesize an arbitrary phase gate using $CNOT$ gates and single-qubit rotations.
    \begin{lemma}\label{small_angle_synthesis}
        Given a diagonal phase gate $D$ acting on $k$ qubits such that $D \ket{\vec{x}} = e^{i \theta(\vec{x})}\ket{\vec{x}}$, where $\vec{x}$ is a $k$-bit bitstring labeling the computational basis. $D$ can be synthesized from $CNOT$ gates and $2^k$ single-qubit diagonal gates $R_{\vec{y}}$, where $\vec{y}$ is a $k$-bit bitstring. If we let $R_{\vec{y}}=\ketbra{0}{0}+e^{i\phi(\vec{y})}\ketbra{1}{1}$, then $\phi(\vec{y})$ is related to $\theta(\vec{x})$ by
        \begin{equation}\label{small_angle_fourier}
            \phi(\vec{y})  = \sum_{\vec{x}}\frac{1}{2^{k-1}} (-1)^{\vec{x}.\vec{y}} \theta(\vec{x})
        \end{equation}
    \end{lemma}
    We apply the above lemma to synthesize $C^k Z^{2^{k-l+1}}$ which acts on $k+1$ qubits. In this case we have $\theta(1^{k+1})= \frac{\pi}{2^{l-k-1}}$ and all other $\theta(\vec{x})=0$. Plugging these values into Eq. (\ref{small_angle_fourier}), we have $\phi(\vec{y}) = \pm \frac{ \pi}{2^{l-1}}$, $\forall \vec{y}$ which is exactly $Z^{2^{1-l}}$ or its inverse. This establishes that $C^k Z^{2^{k-l+1}}$ can be synthesized with $CNOT$ gates,  $Z^{2^{1-l}}$ gates and its inverse. Finally, since $CNOT$ gates and  $Z^{2^{1-l}}$ gates are both almost classical gates, we apply Lemma \ref{almost_classical_one_layer} to put  $Z^{2^{1-l}}$ gates and its inverse into one layer, after appending ancilla.
\end{proof}
As a immediate corollary, the $CS$ gates in Fig. \ref{fig:cccz}(a) can be synthesized using one layer of $T^{\pm 1}$ gates, and the $CCCZ$ gate can be synthesized using one layer of $T^{\pm \frac{1}{2}}$ gates which is given in Fig. \ref{fig:cccz}(b).

Next, we discuss the synthesis of gates in $\mathcal{CH}_l$ using multiple layers of gates in $\mathcal{CH}_{m}$, where $m<l$. We show that a $C^{l}Z$ gate can be synthesized using two layers of gates in $\mathcal{CH}_{m}$ when $l$ is not too big.
\begin{theorem}
     The $C^{l}Z$ gate, where $l \le 2m$, can be synthesized from Clifford gates and two layers of diagonal gates from $\mathcal{D}_{m+1}$, after appending one clean ancilla.
\end{theorem}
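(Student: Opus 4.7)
The plan is to give an explicit circuit decomposition that generalizes the two-layer $CCCZ$ construction of Fig.~\ref{fig:cccz}(a). I would partition the $l+1$ qubits on which $C^l Z$ acts into two subsets $A$ and $B$ of sizes $|A|, |B| \le m$ and set $u = \prod_{i\in A} x_i$, $v = \prod_{j\in B} x_j$, so that $uv = x_1 \cdots x_{l+1}$. The Boolean identity $(-1)^{uv} = i^{u+v}(-i)^{u\oplus v}$ then breaks the target phase into (a) a \emph{diagonal} piece $i^{u+v}$, realized by the data-only gates $C^{|A|-1}S$ on $A$ and $C^{|B|-1}S$ on $B$---both of which sit inside $\mathcal{D}_{m+1}$ whenever $|A|,|B| \le m$, by the generator description of $\mathcal{D}_{m+1}$---and (b) a \emph{cross} piece $(-i)^{u\oplus v}$, which is the only part that genuinely requires the ancilla.

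The cross piece is realized by phase kickback, with the layer structure forced by the identity above. First I would apply $H_a$ to put the ancilla in $|+\rangle$. The first $\mathcal{D}_{m+1}$ layer consists of the kickback gates $C^{|A|}Z$ on $A \cup \{a\}$ and $C^{|B|}Z$ on $B \cup \{a\}$ (each in $\mathcal{D}_{m+1}$), together with the two data-only controlled-$S$ gates for the $i^{u+v}$ factor; since all of these gates are diagonal they commute and genuinely fit in one layer. The kickback leaves the ancilla in $H|u\oplus v\rangle$. A middle Clifford $H_a S_a^{-1} H_a$ (equivalent up to a global phase to the $X^{-1/2}$ gate used in Fig.~\ref{fig:cccz}(a)) imparts exactly the $(-i)^{u\oplus v}$ factor while preserving the ancilla's $X$-basis label. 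A second $\mathcal{D}_{m+1}$ layer then re-applies the same two kickback $C^{\cdot}Z$ gates, XOR-ing $u\oplus v$ back out of the ancilla to restore $|+\rangle$, and a final $H_a$ returns the ancilla to $|0\rangle$. Correctness follows by tracking the phase through the five stages and repeatedly invoking $x_i^2 = x_i$ over $\mathbb{F}_2$, exactly paralleling the verification one does for $CCCZ$.

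The main obstacle is the saturating case $l = 2m$, where the partition $|A|+|B| = l+1$ with $|A|,|B| \le m$ cannot be made disjoint. I would handle this edge case by a more delicate construction---for example, inserting a single CNOT in the middle Clifford that transfers one data qubit's value onto the ancilla, so that a single $C^m Z$ in the second layer can combine one extra data variable with the kickback-encoded function, effectively stretching the degree budget by one. Verifying that, at this boundary, the ancilla still returns exactly to $|0\rangle$ and that no extraneous $\pm i$ factors survive on the data---using the idempotence identities $x_i^2 = x_i$ and $x_i \oplus x_i = 0$ over $\mathbb{F}_2$ to cancel the spurious cross terms---is where the bulk of the additional bookkeeping lies and is the hardest part of the proof to get right.
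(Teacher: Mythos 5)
You are following the paper's construction essentially verbatim: the Boolean identity $(-1)^{uv}=i^{u}i^{v}(-i)^{u\oplus v}$, a single-ancilla phase kickback for the $(-i)^{u\oplus v}$ factor, data-only controlled-$S$ gates for $i^u i^v$, and the $X^{-1/2}=HS^{-1}H$ Clifford on the ancilla, which forces the kickback Toffolis into two separate diagonal layers. The paper sets $a=a_1\cdots a_m$, $b=b_1\cdots b_{m'}$, so $C^l Z$ acts on $m+m'$ qubits, giving $l=m+m'-1$; with $m=m'$ this is $l=2m-1$, which is exactly your requirement that a disjoint partition $|A|+|B|=l+1$ with $|A|,|B|\le m$ exists.

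Your worry about the saturating case $l=2m$ is legitimate, and your proposed patch does not repair it. From the generator set $C^k Z^{2^{k-l+1}}$ of $\mathcal{D}_l$ one has $C^m Z,\,C^{m-1}S\in\mathcal{D}_{m+1}$, whereas $C^{m+1}Z$ and $C^m S$ live in $\mathcal{D}_{m+2}$; the paper's parenthetical claim that ``$C^{m+1}Z$ and $C^m S$ belong to $\mathcal{D}_{m+1}$'' therefore carries an off-by-one, and its own count of the data qubits lands at $l=2m-1$, not $2m$. As for the CNOT sketch: if the middle Clifford contains a CNOT from the extra qubit $c$ into the ancilla, the ancilla's computational-basis label becomes $u\oplus v\oplus c$, and the total accumulated phase is $i^u i^v(-i)^{u\oplus v\oplus c}$. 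Expanding the exponent over the integers gives $(-i)^c(-1)^{uv+uc+vc}$, not the target $(-1)^{uvc}$; the spurious factors $(-i)^c$, $(-1)^{uc}$, $(-1)^{vc}$ are all non-Clifford on the data register and cannot be removed by the Clifford budget. To genuinely pick up a triple product $uvc$ one needs a controlled-$S^{-1}$ between $c$ and the ancilla, which is itself a magic gate and would cost a third diagonal layer. So your argument establishes $l\le 2m-1$, and the $l=2m$ boundary of the theorem remains open in your write-up.
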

\begin{proof}
    We give an explicit construction in Fig. \ref{fig:cnz} that generalizes the construction in \cite{gidney2021cccz}. We first explain the resource requirement in the second equality. The construction consists of two $C^{m+1} Z$ gates, two $C^{m'+1}Z $ gates, a $C^m S$ gate, a $C^{m'}S$ gate, and a Clifford gate $X^{-\frac{1}{2}}=H S^{-1} H$. One can see that $C^{m+1} Z$ and $C^m S$ gates belong to $\mathcal{D}_{m+1}$ while $C^{m'+1} Z$ and $C^{m'} S$ gates belong to $\mathcal{D}_{m'+1}$. By setting $m=m'$, we can synthesize the $C^lZ$ gate where $l=2m$ which gives the upper bound on $l$.

    Next, we explain the correctness of this construction. The construction is based on the following identity:
    \begin{equation}
        (-1)^{ab} = i^a i^b (-i)^{a \oplus b}
    \end{equation}
    where $a, b \in \{0,1\}$ are Boolean variables, $ab$ denotes $a$ AND $b$, and $a \oplus b$ denotes a XOR $b$. Now we set $a$ to be the AND product of $m$ bits $a= a_1a_2 \ldots a_m$, set $b$ to be the AND product of $m'$ bits $b= b_1b_2 \ldots b_{m'}$. $(-1)^{ab}$ is exactly a $C^lZ$ gate acting on $a_1, a_2, \ldots, a_m, b_1, b_2, \ldots, b_{m'}$. 
    
  The first equality in Fig. \ref{fig:cnz} reflects the right-hand side of the above identity. We introduce an ancilla qubit and use $m$-Toffoli and $m'$-Toffoli gates to compute $a \oplus b$. We then apply a $S^{-1}$ gate to introduce a phase $(-i)^{a \oplus b}$ and apply a $C^m S$ gate and a $C^{m'}S$ gate to introduce a phase $i^a i^b$. Lastly, we apply the Toffoli gates again to uncompute the ancilla. Writing the Toffoli gates as $C^m Z$ gates sandwiched by Hadamard gates on the target bit, we obtain the second equality.
\end{proof}
\begin{figure}
\includegraphics[width=0.9\linewidth]{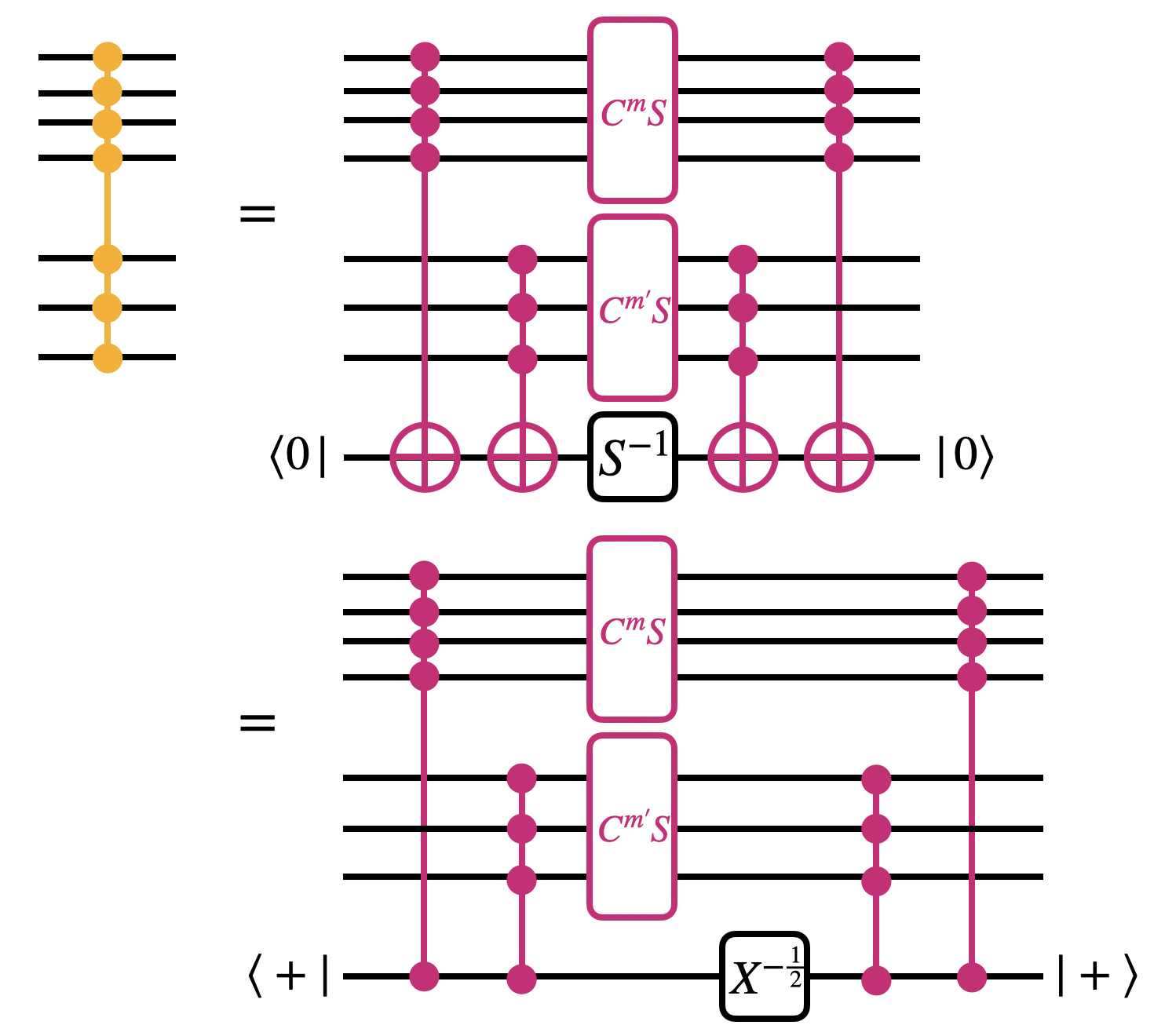}
\caption{\label{fig:cnz}\textbf{Synthesis of the $C^l Z$ gate} Synthesizing $C^lZ$ gates with two layers of diagonal gates from $\mathcal{D}_{m+1}$, where $l \le 2m$. The magenta gates are in $\mathcal{D}_{m+1}$, while the gate $X^{-\frac{1}{2}}$ is a Clifford gate. $X^{-\frac{1}{2}}$ prevents the two magic layers from being parallelized into one layer.
}
\end{figure}
Lastly, the above construction does not allow the synthesis of generic gates in $\mathcal{D}_{l}$ using two layers of gates in the lower Clifford hierarchy. For example, to our best knowledge, currently the best construction to synthesize a $CT$ gate, which is in $\mathcal{D}_4$, takes three layers of gates in $\mathcal{D}_3$ \cite{ct_gate}.

\begin{figure*}
\includegraphics[width=0.9\linewidth]{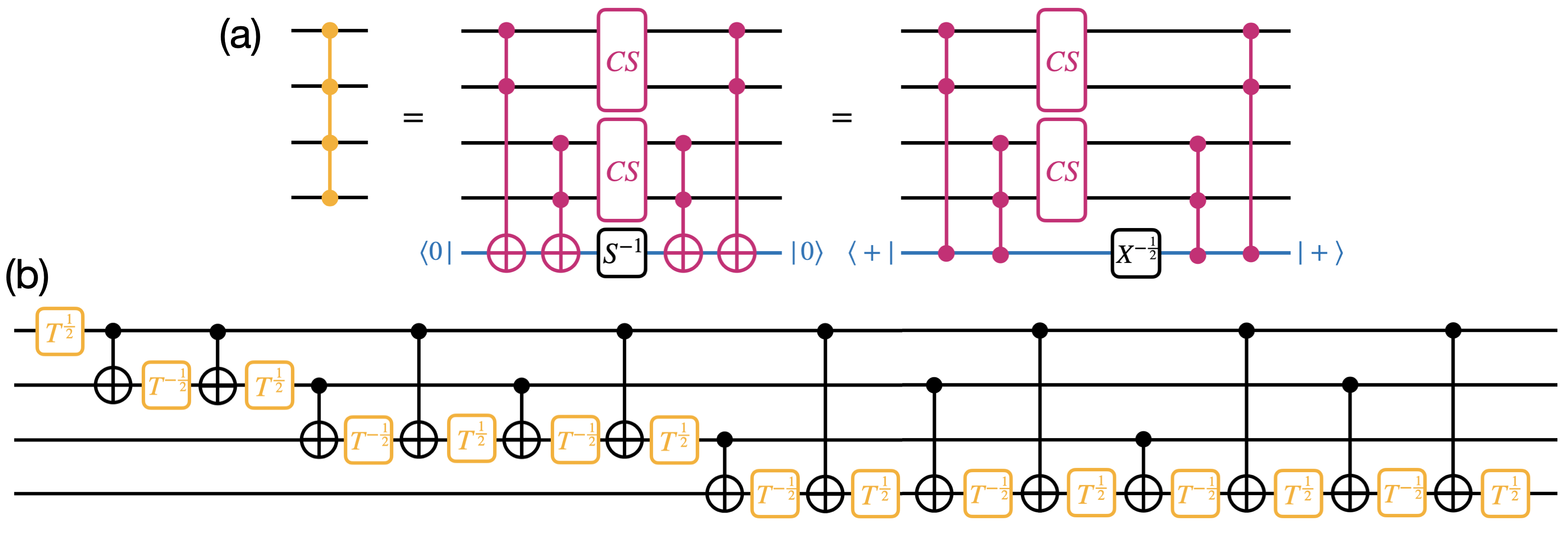}
\caption{\label{fig:cccz}\textbf{Synthesis of the $CCCZ$ gate.} (a) Synthesizing $CCCZ$ gate with two layers of $CCZ$ and $CS$ gates, equivalently two layer of $T^{\pm 1}$ gates after including ancilla. (b) Synthesizing $CCCZ$ gate with $CNOT$ gates and $T^{\pm \frac{1}{2}}$ gates. The circuit can be compiled to have one layer of $T^{\pm \frac{1}{2}}$ gates.}
\end{figure*}

\subsubsection{Proof of Hardness}
With the ingredient of the Hadamard test and the decomposition of the $CCCZ$ gate, we are ready to establish the hardness of computing Pauli observable in $T$-depth-two and $T^{\frac{1}{2}}$-depth-one circuits.
\begin{theorem}
       Given a circuit with one layer of non-Clifford gate $U= (\prod_i T_i^{\frac{k_i}{2}}) U_{c,l}$, where $T_i^{\frac{1}{2}}$ acts on the qubit $i$ and $k_i$ denotes some integer power, then computing Pauli observable $\braket{P}=\bra{0} U^\dag P U \ket{0}$ up to a $1$ multiplicative error is GapP-complete
\end{theorem}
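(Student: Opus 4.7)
The plan is to reduce the GapP-complete task of computing the amplitude of a degree-three IQP circuit (Proposition \ref{iqp_sharp_p}) to computing a Pauli observable in a $T^{1/2}$-depth-one circuit, and then to handle GapP membership by the same template as Theorem \ref{thm:hardness_amplitude}. The starting point is the Hadamard test of Fig. \ref{fig:3ch_4ch}(b): applied with $\ket{\psi}=\ket{+}^n$ and $U=D_{IQP3}$, it turns the IQP amplitude $\Re\bra{0^n}H^{\otimes n}D_{IQP3}H^{\otimes n}\ket{0^n}$ into the Pauli-$Z_0$ observable on the test ancilla of a circuit whose only non-Clifford component is a controlled-$D_{IQP3}$, itself a product of $CZ$, $CCZ$, and $CCCZ$ gates.

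First I would put all of these diagonal gates into a single layer using the parallelization trick of Fig. \ref{fig:parallelization}, after which each becomes an independent block on its own set of qubits. I would then compile each block into $CNOT$'s and $T^{\pm 1/2}$'s: the $CZ$ block is Clifford, each $CCCZ$ is already given as a $CNOT$ and $T^{\pm 1/2}$ circuit in Fig. \ref{fig:cccz}(b), and each $CCZ$ is first decomposed into $CNOT$ and $T^{\pm 1}$ via Fig. \ref{fig:ccz}(a) with every $T^{\pm 1}$ rewritten as $(T^{\pm 1/2})^2$. Since $CNOT$ and $T^{\pm 1/2}$ are both almost classical (Definition \ref{def:almost_classical}), Lemma \ref{almost_classical_one_layer} then compresses all $T^{\pm 1/2}$'s across every block into a single layer after appending further ancillas, producing a circuit of the form $(\prod_i T_i^{k_i/2})U_{c,l}$ whose $Z_0$ expectation equals $\Re\bra{0^n}H^{\otimes n}D_{IQP3}H^{\otimes n}\ket{0^n}$. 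Hardness of multiplicative-$1$ approximation then follows from Proposition \ref{iqp_sharp_p}.

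For GapP membership I would follow Theorem \ref{thm:hardness_amplitude}: write $U_{c,l}\ket{0}$ in the computational basis as $\sum_{x\in\mathcal{K}}e^{i\pi q(x)/4}\ket{x}$ using \cite{dehaene2003clifford}, so that each amplitude of $U\ket{0}$ is a polynomial-time-computable sixteenth root of unity; for a Pauli $P=\alpha X^u Z^v$ the expectation unfolds as $\alpha\sum_y c_{y\oplus u}^* c_y (-1)^{v\cdot y}$, a poly-time enumerable sum whose real part expands as a fixed $\mathbb{Z}$-linear combination of $\{1,\ 1/\sqrt{2},\ \cos(\pi/8),\ \sin(\pi/8)\}$ with coefficients that are differences of polynomial-time counting functions, each of which lies in GapP. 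The step I expect to need the most care is the $T^{1/2}$-depth accounting when compiling the $CCZ$ blocks: the natural decomposition yields $T^{1/2}$-depth-two after the substitution $T=(T^{1/2})^2$, so Lemma \ref{almost_classical_one_layer} must be invoked in a slightly non-obvious way to recompress these serial $T^{1/2}$'s, together with the parallel $T^{1/2}$'s coming from the $CCCZ$ blocks, back down to a single layer via ancillas, and the sign convention of the residual Clifford surrounding must be tracked consistently through this compression.
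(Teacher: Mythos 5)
Your proposal follows the paper's proof essentially step for step: reduce the degree-three IQP amplitude to a Pauli observable via the Hadamard test, parallelize the resulting $CZ$, $CCZ$, $CCCZ$ diagonals, compile $CCZ$ (Fig.~\ref{fig:ccz}(a)) and $CCCZ$ (Fig.~\ref{fig:cccz}(b)) into $CNOT$'s and powers of $T^{1/2}$, collect into one layer via Lemma~\ref{almost_classical_one_layer}, and import hardness from Proposition~\ref{iqp_sharp_p}; your GapP-membership sketch is compatible with the one the paper gives in the proof of Theorem~\ref{thm:hardness_amplitude}.

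The only thing worth correcting is the concern you flag at the end, which is not a genuine subtlety. The theorem's layer consists of gates $T_i^{k_i/2}$ for \emph{arbitrary} integers $k_i$, so the $T^{\pm 1}$ gates produced by the $CCZ$ blocks are already of the permitted form with $k_i = \pm 2$; the paper disposes of this with the single observation that $T^{\pm 1}$ gates are integer powers of $T^{1/2}$, and nothing needs to be ``recompressed.'' Equivalently, if you do choose to rewrite $T^{\pm 1}$ as $(T^{\pm 1/2})^2$ and run Lemma~\ref{almost_classical_one_layer} with $D = T^{1/2}$, the lemma's induction already absorbs arbitrary sequences of $D^k$ gates (consecutive or not) into a single diagonal layer via the same ancilla construction as any other $D^k$, so there is no extra care needed about depth or about the sign convention of the surrounding Clifford beyond what the lemma already handles.
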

\begin{proof}
    We first construct a Hadamard test circuit (Fig. \ref{fig:3ch_4ch}(c)) to reduce the task of computing the amplitude of any degree-three IQP circuit to the task of computing the Pauli observable in a circuit with $CZ$, $CCZ$, and $CCCZ$ gates. Next, we use the parallelization trick (Fig. \ref{fig:parallelization}) to parallelize the diagonal gates. Then, we compile $CCZ$ gates to one layer of $T^{\pm 1}$ gates (Fig. \ref{fig:ccz}(a)) and compile $CCCZ$ gates to one layer of $T^{\pm \frac{1}{2}}$ gates (Lemma \ref{almost_classical_one_layer} and Fig. \ref{fig:ccz}(b)). Naturally, $T^{\pm 1}$ gates are integer powers of $T^{\frac{1}{2}}$. Therefore, the hardness of computing Pauli observable in $T^{\frac{1}{2}}$-depth-one circuits follows from the hardness of computing the amplitude of the degree-three IQP circuit.
\end{proof}

\begin{theorem}
       Given a circuit with two layers of T gates $U=\prod_i (\prod_i T_i^{k_{m,i}}) U_{c,m} \prod_i (\prod_i T_i^{k_{l,i}}) U_{c,l}$, where $T_i$ acts on the qubit $i$ and $k_{l,i}$, $k_{m,i}$ denote some integer power, then computing the Pauli observable $\braket{P}=\bra{0} U^\dag P U \ket{0}$ up to a $1$ multiplicative error is GapP-complete
\end{theorem}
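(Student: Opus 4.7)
The plan mirrors the argument used for the $T^{1/2}$-depth-one case, only swapping in the two-layer decomposition of $CCCZ$ from Fig.~\ref{fig:cccz}(a) in place of the one-layer decomposition from Fig.~\ref{fig:cccz}(b). First I would establish GapP-membership exactly as in Theorem~\ref{thm:hardness_amplitude}: write the stabilizer states $U_{c,l}\ket{0}$ (and $(\prod_i T_i^{k_{l,i}})U_{c,l}\ket{0}$ after applying the phase gates) in the computational basis form of Eq.~(\ref{clifford_comp_basis}), push the result through $U_{c,m}$ and the second $T$-layer by tracking a linear transformation on the affine support and an update of the quadratic form, and finally sum $\Re[\bra{x}P\ket{y}]\,e^{i\pi(q_1(y)-q_2(x))/4}$ over the two affine supports, which is expressible as the difference between accepting and rejecting paths of a polynomial-time nondeterministic machine.

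For GapP-hardness I would reduce from computing amplitudes of degree-three IQP circuits (Proposition~\ref{iqp_sharp_p}) using the Hadamard test of Fig.~\ref{fig:3ch_4ch}(b,c). Setting $\ket{\psi}=\ket{+}^n$ and $U=D_{IQP3}$ in the Hadamard test turns the task of estimating $\Re[\bra{0}H^{\otimes n}D_{IQP3}H^{\otimes n}\ket{0}]$ into the task of estimating $\langle Z_0\rangle$ in a circuit whose only non-Clifford content is the controlled version of $D_{IQP3}$, namely a collection of $CZ$, $CCZ$, and $CCCZ$ gates. I would then apply the parallelization trick of Fig.~\ref{fig:parallelization} to put all of these diagonal gates in a single diagonal ``block''. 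The $CZ$ gates are Clifford and disappear into the surrounding Clifford layers; each $CCZ$ compiles into one layer of $T^{\pm 1}$ gates by Fig.~\ref{fig:ccz}(b); and each $CCCZ$ compiles into two layers of $T^{\pm 1}$ gates via the $CCZ$/$CS$ decomposition of Fig.~\ref{fig:cccz}(a), where each of the two $CS$-and-$CCZ$ sub-layers is an almost classical circuit and can therefore itself be flattened to a single $T$-layer using Lemma~\ref{almost_classical_one_layer}.

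The main subtlety, and the step I would be most careful about, is verifying that the total $T$-depth of the compiled Hadamard test is really two rather than creeping up to three. The potential worry is that when we chain a $CCZ$ block and a $CCCZ$ block that share qubits, the $CCCZ$ contributes two $T$-layers whose outer one might not combine with the $CCZ$'s $T$-layer. The saving feature is that, after the parallelization trick, all of the controlled-diagonal gates sit inside one common diagonal block sandwiched by Clifford data-copy and data-uncopy stages. Within that block the $CCCZ$ decomposition of Fig.~\ref{fig:cccz}(a) has a Clifford $X^{-1/2}$ separating its two $T$-layers; all $CCZ$'s and the outer $T$-layers of all $CCCZ$'s can be merged with Lemma~\ref{almost_classical_one_layer} into one joint $T$-layer, and likewise all inner $T$-layers of all $CCCZ$'s into the other joint $T$-layer, giving exactly two $T$-layers. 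Everything outside the diagonal block, including the control qubit's Hadamards and the ancilla bookkeeping, is Clifford and can be absorbed into $U_{c,l}$, $U_{c,m}$, and the Pauli $Z_0$ being measured. Since Proposition~\ref{iqp_sharp_p} makes the resulting amplitude-to-multiplicative-$1$ problem GapP-hard and the Hadamard test converts multiplicative $1$ accuracy on $\Re[\bra{\psi}U\ket{\psi}]$ into multiplicative $1$ accuracy on $\langle Z_0\rangle$, GapP-hardness of Pauli evaluation in $T$-depth-two circuits follows.
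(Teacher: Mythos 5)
Your proof follows essentially the same route as the paper's: reduce from degree-three IQP amplitude estimation via the Hadamard test, parallelize the controlled-diagonal gates, and decompose each $CCCZ$ using the $CCZ$/$CS$ construction of Fig.~\ref{fig:cccz}(a), flattening each of the two resulting almost-classical sub-layers to a single $T$-layer via Lemma~\ref{almost_classical_one_layer}. Your extra care in spelling out GapP-membership and in arguing that the compiled depth stays at two (because after parallelization the decomposed $CCCZ$'s act on disjoint ancilla registers, so their outer and inner $T$-layers merge cleanly) makes explicit a point the paper leaves implicit, but the underlying argument is the same.
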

\begin{proof}
    The proof is similar to the proof of the previous theorem, except we decompose the $CCCZ$ gate into two layers of $CCZ$ and $CS$ gates, shown in Fig. \ref{fig:cccz}(a). Then, using Lemma \ref{ch_synthesis}, both $CCZ$ and $CS$ gates can be compiled into one layer of $T^{\pm 1}$ gates (Lemma \ref{almost_classical_one_layer}). The entire circuit then contains two layers of $T^{\pm 1}$ gates. 
\end{proof}

\section{Estimating Observable in Magic-Depth-One Circuits}

We have seen the easiness and hardness of computing the amplitude and the Pauli observable, up to a small multiplicative error, in magic-depth-one circuit. Nevertheless, a quantum computer computes amplitudes or Pauli observable only up to 1/poly($n$) additive error in polynomial time because it is a sampling machine. Recall that a $\epsilon$ multiplicative error means that the estimate $\tilde{z}$ deviates from the ground truth $z$ by $|\tilde{z}-z| \le \epsilon |z|$, while $|z|$ can be exponentially small in $n$. On the other hand, an $\epsilon$ additive error only requires that $|\tilde{z}-z| \le \epsilon$. One can see that having a small additive error is a more relaxing constraint than having a small multiplicative error when $|z|$ is small.

While we have shown that sampling from magic-depth-one circuit is classically hard under plausible complexity-theoretic conjectures, we give a polynomial-time classical algorithm to compute both amplitudes and Pauli observable up to 1/poly($n$) additive error. 
%

\subsection{Estimating Observable via Sampling an Auxiliary Distribution}
The main idea of the classical algorithm is to find an auxiliary sampling problem that produces the same Pauli observable values. Since estimating the Pauli observables becomes a sampling problem, one can also estimate any observable that is a uniform superposition of $A$ Pauli observables $P=\frac{1}{A} \sum_a P_a$. One simply has to sample $P_a$ from the uniform distribution and then estimate $P_a$ via sampling. This gives an estimate of $P$ up to a small additive error.

\vtwo{
\begin{theorem}
\label{thm:additive}  Given a circuit with one layer of diagonal non-Clifford gates $U=D U_{c,l}$, where $D$ is a diagonal gate. whose elements can be computed in time $t(n)$. Suppose we want to estimate an observable that can be written as the uniform average of $A$ Pauli observable: $P=\frac{1}{A} \sum_{a=1}^{A} P_a$. Then there exists a classical algorithm to estimate $\bra{0} U^\dag P U \ket{0}$ up to $\epsilon$ additive error and with $1-\delta$ probability in time $O(n^3 + \max(n, t(n))\frac{ \log(2/\delta )}{ \epsilon^2})$.
\end{theorem}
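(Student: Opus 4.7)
The plan is to recast $\bra{0} U^\dag P U \ket{0}$ as the expectation of a bounded complex-valued random variable over an efficiently samplable classical distribution, then apply Hoeffding's inequality. First, I would put the stabilizer state $U_{c,l}\ket{0}$ into the canonical form $\frac{1}{\sqrt{|\mathcal{K}|}}\sum_{x \in \mathcal{K}} e^{i\pi q(x)/4}\ket{x}$ of Eq.~(\ref{clifford_comp_basis}), along with a basis for the linear part $\mathcal{K}_0$ of $\mathcal{K} = v_0 + \mathcal{K}_0$; this is the $O(n^3)$ preprocessing step \cite{dehaene2003clifford}. Because $D$ is diagonal with phases $e^{i\phi_D(x)} = \bra{x} D \ket{x}$ evaluable in time $t(n)$, the prepared state $D U_{c,l}\ket{0}$ has the same support $\mathcal{K}$ with phases $e^{i\pi q(x)/4 + i\phi_D(x)}$.

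Next, writing each Pauli $P_a$ as a unit-modulus scalar times $Z^{\vec z_a} X^{\vec x_a}$, the matrix element $\bra{x} P_a \ket{y}$ is nonzero only when $x = y \oplus \vec x_a$, and together with the requirement $x, y \in \mathcal{K}$ this forces $\vec x_a$ to lie in $\mathcal{K}_0$. When $\vec x_a \in \mathcal{K}_0$, every $y \in \mathcal{K}$ contributes exactly one summand of modulus $1/|\mathcal{K}|$; otherwise the entire sum vanishes. I define $F(a, y)$ for $(a, y)$ drawn uniformly on $\{1, \ldots, A\} \times \mathcal{K}$ to equal the corresponding unit-modulus phase contribution when $\vec x_a \in \mathcal{K}_0$ and $0$ otherwise. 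Linearity in $a$ gives
\begin{equation}
    \mathbb{E}[F] = \bra{0} U^\dag P U \ket{0}, \qquad |F| \le 1,
\end{equation}
and the expectation is real because each $P_a$ is Hermitian.

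Applying Hoeffding's inequality to $\mathrm{Re}\,F$, the number of i.i.d.\ samples needed for $\epsilon$ additive error with confidence $1 - \delta$ is $N = O(\log(1/\delta)/\epsilon^2)$. Each sample draws $a \in \{1,\ldots,A\}$, draws $y \in \mathcal{K}$ as $v_0$ plus a random combination of the precomputed basis of $\mathcal{K}_0$, tests $\vec x_a \in \mathcal{K}_0$, evaluates the Pauli sign, queries $D$ twice at cost $2 t(n)$, and evaluates $q(y) - q(y \oplus \vec x_a) \pmod 8$. The main obstacle is fitting every step of a sample into $O(\max(n, t(n)))$ time: a naive evaluation of the quadratic form at two points would cost $O(n^2)$. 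The saving uses the standard fact that in the canonical stabilizer representation $q = 2 l + 4 Q$ with $l$ linear and $Q$ quadratic over $\mathbb{F}_2$, so the quadratic-in-$y$ terms of $q(y \oplus \vec x_a) - q(y)$ all vanish modulo $8$, leaving an affine function of $y$ whose linear coefficient vector is the elementwise product of the precomputed $l$ with $\vec x_a$. Evaluating this at $y$ is a single $O(n)$ inner product, and totaling over $N$ samples yields the claimed $O(n^3 + \max(n, t(n)) \log(1/\delta)/\epsilon^2)$ runtime.
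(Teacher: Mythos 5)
Your proposal is correct and arrives at essentially the same Monte-Carlo algorithm as the paper, but by a somewhat different route. The paper works on the level of the stabilizer tableau: it pushes $D$ through $P$ to obtain a residual diagonal $D'$ (Lemma~\ref{diagonal_push}), expands $\ketbra{\psi_{\mathrm{stab}}}{\psi_{\mathrm{stab}}}$ into products of canonicalized $X$-type and $Z$-type generators, solves the $\mathbb{F}_2$ system $\prod_j S_{X,j}' s_j = P_X$ to extract the unique fully-diagonal term $\tilde P$, and then samples uniformly from the affine support determined by the $Z$-type generators. You instead work directly with the quadratic-form representation of $U_{c,l}\ket{0}$ from Eq.~(\ref{clifford_comp_basis}), expand $\bra{\psi}D^\dagger P_a D\ket{\psi}$ in the computational basis, note that $P_a$ pairs $y$ with $y\oplus\vec x_a$ so a nonzero contribution forces $\vec x_a\in\mathcal K_0$, and Monte-Carlo over $(a,y)$. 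The two proofs are equivalent at the core: the sampling distribution is the same (uniform over $\{1,\dots,A\}\times\mathcal K$), and your membership test $\vec x_a\in\mathcal K_0$ is precisely the paper's solvability condition for the $X$-part linear system. Your version is more self-contained, sidestepping Lemma~\ref{diagonal_push} and explicit tableau bookkeeping; the paper's version is more closely tied to the Aaronson--Gottesman/Bravyi--Gosset machinery it reuses elsewhere.

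One small technical inaccuracy in the per-sample cost discussion: the linear-in-$y$ coefficient of $q(y\oplus\vec x_a)-q(y) \pmod 8$ is not the elementwise product of $l$ with $\vec x_a$. Writing $q = 2l + 4Q$ with $Q$ quadratic over $\mathbb F_2$ and $B$ the associated bilinear form, the polarization identity gives
\begin{equation}
q(y\oplus\vec x_a)-q(y) \equiv 2l(\vec x_a) + 4Q(\vec x_a) + 4\,l(\vec x_a)\,l(y) + 4\,B(y,\vec x_a) \pmod 8,
\end{equation}
so the linear coefficient vector also picks up the $B(\cdot,\vec x_a)$ contribution, not just a term built from $l$. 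The qualitative point you need — that the difference is affine in $y$ so each sample avoids a full quadratic-form recomputation — is correct, and the overall correctness, unbiasedness, and Hoeffding sample complexity of your estimator are unaffected.
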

}
One can see that the task of estimating Pauli observables is the case of $A=1$. Estimating probability in the computational basis corresponds to setting $P_a$ to be the full stabilizer group, in which case $A=2^n$. More generally, one can also estimate the marginal probability on $k$ qubits. For example, to estimate the probability that the first $k$ qubits are zero, we estimate the following observable:
\begin{equation}
    P = U_{c,r}^\dag\prod_{i=1}^{k} \frac{1+ Z_i}{2} U_{c,r}
\end{equation}
One can again see that this is the average of $A=2^k$ Pauli operators. \vtwo{We note that $D$ can be a non-local gate in general. We only demand the ability to compute the elements efficiently. Therefore, Theorem \ref{thm:additive} applies to $T$-depth-one circuits, $T^{\frac{1}{2}}$-depth-one circuits, and more generally, circuits with one layer of non-local diagonal gates. When $D$ is a product of local diagonal gates, the time complexity of computing elements $t(n)=O(n)$.}

\begin{figure*}
\includegraphics[width=\linewidth]{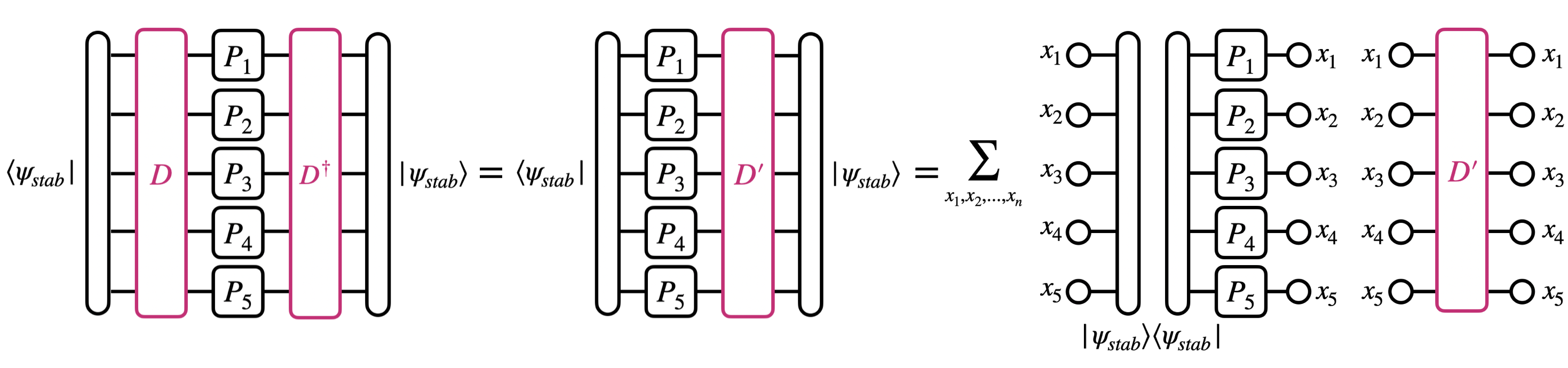}
\caption{\label{fig:diagonal_sampling}\textbf{Constructing the auxiliary sampling problem.} Converting the problem of computing Pauli observable to sampling from the diagonal distribution of $\ketbra{\psi_{stab}}{\psi_{stab}}P$ (if it is non-trivial) and evaluating the expectation value of some other observable $\prod_i D_i'$, defined in Lemma \ref{diagonal_push}.}
\end{figure*}

\vtwo{
We will use the following lemma in constructing the classical algorithm.
\begin{lemma}\label{diagonal_push}
   For any Pauli operator $P$ and any Diagonal gate $D$,
    \begin{equation}
        D^\dag P D = P D'
    \end{equation}
    Where $D'$ is another diagonal unitary determined by $D$ and $P$. Moreover, by having query access to $D$, any on-diagonal element of $D'$ can be computed in $O(n)$ time.
\end{lemma}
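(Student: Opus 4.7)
The plan is to reduce to a computation on the canonical Pauli form. First I would write $P = i^\alpha X^a Z^b$ with $a,b\in\{0,1\}^n$ and $\alpha\in\{0,1,2,3\}$, and expand the diagonal unitary as $D=\sum_{x\in\{0,1\}^n} d(x)\,\ketbra{x}{x}$ with $|d(x)|=1$. Since $D$ is diagonal, it commutes with every $Z^b$, so $D^\dagger Z^b D = Z^b$ and the only nontrivial step is to push $D^\dagger$ through $X^a$.

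Second, I would carry out that push directly. Using $X^a\ket{x}=\ket{x\oplus a}$, a one-line computation gives
\begin{equation}
    D^\dagger X^a D \;=\; X^a\,\tilde D,\qquad \tilde D \;=\; \sum_{x}\, d(x)\,\overline{d(x\oplus a)}\,\ketbra{x}{x},
\end{equation}
which is manifestly diagonal and unitary because each coefficient has modulus one. Combining this with the $Z^b$ commutation and the fact that $\tilde D$ commutes with $Z^b$ (both diagonal), I get
\begin{equation}
    D^\dagger P D \;=\; i^\alpha X^a \tilde D Z^b \;=\; i^\alpha X^a Z^b\, \tilde D \;=\; P\, D',
\end{equation}
so the required diagonal unitary is $D'=\tilde D$, determined entirely by $D$ and the $X$-type support $a$ of $P$.

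Third, the complexity bound is immediate from the closed form for the on-diagonal entries $D'(x)=d(x)\,\overline{d(x\oplus a)}$: given $x$, forming $x\oplus a$ costs $O(n)$ bit operations, then two queries to $D$ produce $d(x)$ and $d(x\oplus a)$, and taking the product of a complex number with the conjugate of another is $O(1)$. Hence each on-diagonal element of $D'$ is computable in $O(n)$ time, as claimed.

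There is essentially no obstacle here beyond bookkeeping; the only point to be careful about is to verify that the overall phase $i^\alpha$ does not leak into $D'$ and that $\tilde D$ remains unitary, both of which follow directly from $|d(x)|=1$. Since $D$ need not be a tensor product of local gates, the lemma applies to arbitrary (possibly non-local) diagonal unitaries as long as we have query access to their diagonal entries, which is exactly the hypothesis that will be invoked in Theorem~\ref{thm:additive}.
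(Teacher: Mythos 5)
Your proof is correct and is essentially the same direct computation as in the paper: both expand $D$ in the computational basis and push $D^\dagger$ through $P$ by evaluating the action on each $\ket{x}$. Your closed form $D'(x)=d(x)\,\overline{d(x\oplus a)}$ is exactly the paper's $e^{-i[\phi_D(f(x))-\phi_D(x)]}$ with the bijection $f(x)=x\oplus a$; the only cosmetic difference is that you use the canonical form $P=i^\alpha X^a Z^b$ whereas the paper treats $P$ abstractly as an almost-classical gate $\sum_x e^{i\phi(x)}\ketbra{f(x)}{x}$.
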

\begin{proof}
We will construct $D'$ explicitly. First, notice that any Pauli operator $P$ is also an almost classical gate. In other words,
\begin{equation}
    P = \sum_x  e^{i \phi(x)}\ketbra{f(x)}{x}
\end{equation}
We write $D$ in the computational basis
\begin{equation}
    D = \sum_x e^{i \phi_{D}(x)} \ketbra{x}{x}
\end{equation}
Now we expand $D^\dag P D$ in the computational basis
\begin{equation}
    D^\dag P D = \sum_x e^{-i [\phi_{D}(f(x))-\phi_{D}(x)]} e^{i \phi(x)}\ketbra{f(x)}{x}
\end{equation}
$D'$ can be defined as
\begin{equation}\label{push_thru_def}
    D'=\sum_x e^{-i [\phi_{D}(f(x))-\phi_{D}(x)]} \ketbra{x}{x}
\end{equation}
Lastly, given query access to $\phi_{D}(x)$ and given that $f(x)$ is computable in $O(n)$ time, any on-diagonal elements can be computed in $O(n)$ time.
\end{proof}
The above lemma ``pushes'' the diagonal operator to the right-hand side. Moreover, computing any elements in $D'$ is computationally efficient, given query access to elements of $D$.
}

\begin{proof}[Proof of Theorem \ref{thm:additive}]
We will explicitly construct the classical algorithm. For simplicity, we begin by setting $A=1$, in other words, we estimate a Pauli operator. The essence of this algorithm is to generate a sampling problem that allows us to estimate Pauli observables. We consider the stabilizer state $\ket{\psi_{stab}}=U_{c,l}\ket{0}$. The Pauli observable $\langle P \rangle$ can be be expressed as 
\begin{equation}
    \langle P \rangle = \Tr[\ketbra{\psi_{stab}}{\psi_{stab}}  D^\dag P D]
\end{equation}
To proceed, we exploit the diagonal structure of $D$ and apply Lemma \ref{diagonal_push} to push $D$ to the right-hand side. After that, we express the expectation value of $P$ by summing over the computational basis.
\begin{align}
    \langle P \rangle &= \Tr[\ketbra{\psi_{stab}}{\psi_{stab}}P D'] \\
    &= \sum_x \braket{x|\psi_{stab}}\bra{\psi_{stab}}P \ket{x}  e^{i \phi_{D'}(x)}
\end{align}
The above equation is visualized in Fig. \ref{fig:diagonal_sampling}. In the first equality, we apply Lemma \ref{diagonal_push}. In the second equality, we calculate the trace by summing over the computational basis. $e^{i \phi_{D'}(x)}=\bra{x} D'\ket{x}$ denotes the phase $ D'$ applies to $\ket{x}$. 

To proceed, we show how to find non-trivial on-diagonal elements $\braket{x|\psi_{stab}}\bra{\psi_{stab}}P \ket{x}$ via canonicalizing the stabilizer tableau \cite{garcia2012efficient}. The canonicalized stabilizer tableau is shown in Fig. \ref{fig:canon_solve}(a). Each row corresponds to a stabilizer generator. In the canonicalized form, the stabilizer tableau contains a $X$ diagonal and a $Z$ diagonal. Along the $X$ diagonal, the elements are all $X$ or $Y$, while along the $Z$ diagonal, the elements are all $Z$. Elements above the $X$ diagonal (orange) contain arbitrary Pauli operators. Elements between the $X$ and $Z$ diagonals contain only $I$ or $Z$. Elements below the $Z$ diagonal contain only $I$.

The two diagonals distinguish two types of stabilizer generators. We denote generators on the $X$ diagonal as $S_{X,j}$ and denote generators on the $Z$ diagonal as $S_{Z,k}$, where $j$ and $k$ are labels.  $S_{X,j}$ contains at least one $X$ or $Y$ on the $X$ diagonal, whereas $S_{Z,k}$ contains no $X$ or $Y$ at all.

With the canonicalized stabilizer generators, we evaluate the on-diagonal elements $\braket{x|\psi_{stab}}\bra{\psi_{stab}}P \ket{x}$ by expanding it into a product of stabilizers.
\begin{equation}
    \braket{x|\psi_{stab}}\bra{\psi_{stab}}P \ket{x} = \frac{1}{2^n} \bra{x} \prod_k (I+S_{Z,k}) \prod_j (I+S_{X,j}) P \ket{x}
\end{equation}
The expression now contains a sum of exponentially many Pauli expectations, but many terms are zero. To see that, note that if a Pauli string contains $X$ or $Y$, then its on-diagonal matrix elements are all zeros. In the above expression, $X$ and $Y$ originate from $\prod_j (I+S_{X,j}) P$. Therefore, we would like to find terms in $\prod_j (I+S_{X,j}) P$ with no $X$ or $Y$.

It turns out that if such a term exists, then it is \emph{unique}, which we denote as $\tilde{P}$. To find $\tilde{P}$ and show its uniqueness, we exploit the diagonal structure of the $S_{X,j}$ part of the tableau. First, finding $\tilde{P}$ can be thought of as using a subset of $S_{X,j}$ to cancel out $X$ and $Y$ in $P$. Specifically, we decompose $P$ into $P=P_X P_Z$, where $P_X$ contains only $I$ and $X$, and $P_Z$ only contains $I$ and $Z$. We set $P_X$ to have the $+1$ sign and absorb any possible phases in $P_Z$. Similarly, we decompose all $S_{X,j}$ into  $S_{X,j}= S_{X,j}^{(Z)}S_{X,j}'$, where $S_{X,j}'$ contains only $I$ and $X$, and $S_{X,j}^{(Z)}$ contains $I$ and $Z$. Again we set $S_{X,j}'$ to have the $+1$ sign and absorb any possible phases in $S_{X,j}^{(Z)}$. After the decomposition, we have
\begin{equation}
    \prod_j (I+S_{X,j}) P = \prod_j (I+S_{X,j}^{(Z)}S_{X,j}') P_X P_Z
\end{equation}
One can think about the above procedure as ``ignoring'' the $Z$ component from $S_{X,j}$ and $P$. To remove $X$ and $Y$, we would need to find a subset of $S_{X,j}'$ that cancels out $P_X$. Specifically, we define a bitstring $\vec{s} \in \mathbb{F}_2^{|S_{X,j}|}$ with length equal to the number of $X$ type stabilizers, denoted as $|S_{X,j}|$. Canceling out $P_X$ is equivalent to solving the following linear equation:
\begin{equation}
    \prod_j S_{X,j}' s_j = P_X
\end{equation}
The above equation is depicted in Fig. \ref{fig:canon_solve}(b). The first term $S_{X,j}'$ corresponds to the upper half of the stabilizer tableau, with $Y$ replaced with $X$ and $Z$ replaced with $I$. Crucially, the above equation can be thought of as an under-determined equation over a finite field $\mathbb{F}_2$. Therefore, the solution does not have to exist. If it does not exist, $\langle P \rangle=0$. On the other hand, if the solution exists, it has to be unique because the tableau of $S_{X,j}'$ is already in an upper-triangular form, and one can find the solution by performing the standard substitution.

Suppose the solution exists, then after canceling out $P_X$ with $\prod_j S_{X,j}' s_j$, The remaining $Z$ component, in other words $\tilde{P}$, consists of
\begin{equation}
    \tilde{P} = (\prod_j S_{X,j}^{(Z)} s_j) P_Z
\end{equation}
With $\tilde{P}$, we can finally write the expression of $\langle P \rangle$ in the following diagonal form.
\begin{align}
    \langle P \rangle &= \frac{1}{2^n} \sum_x  \bra{x} \prod_k (I+S_{Z,k}) \tilde{P} \ket{x} \prod_i e^{i \phi_{D'_i}(x)} \\
    &= \frac{1}{2^n} \sum_x  \bra{x} \prod_k (I+S_{Z,k}) \ket{x}e^{i \phi_{\tilde{P}}(x)} \prod_i e^{i \phi_{D'_i}(x)}
\end{align}
Where in the second line, we use the fact that $\tilde{P}$ is diagonal to take it out of the bracket and replace it with $e^{i \phi_{\tilde{P}}(x)} = \bra{x} \tilde{P} \ket{x}$. The above equation can be considered as taking the expectation value of $e^{i \phi_{\tilde{P}}(x)} \prod_i e^{i \phi_{D'_i}(x)}$ over the diagonal distribution of a mixed stabilizer state generated by $S_{Z,k}$. Crucially, the diagonal distribution is a uniform distribution over the affine subspace, so it can be easily sampled. To be concrete, the mixed stabilizer state has the following diagonal form when written in the computational basis:
\begin{equation}\label{mixed_stab_comp_basis}
    \frac{1}{2^n} \prod_k (I+S_{Z,k}) = \frac{1}{2^r}\sum_{t=0}^{2^r-1} \ketbra{At + b}{At + b}
\end{equation}
Where $t: \mathbb{F}_2^r$ denotes a length-$r$ bitstring, $A: \mathbb{F}_2^{n \times r}$ and $b: \mathbb{F}_2^{n}$ can be derived from $S_{Z,k}$ in $O(n^3)$ time \cite{dehaene2003clifford}. Therefore, one can sample $t$ from the uniform distribution and estimate $\langle P \rangle$ accordingly.
\begin{figure}
\includegraphics[width=0.9\linewidth]{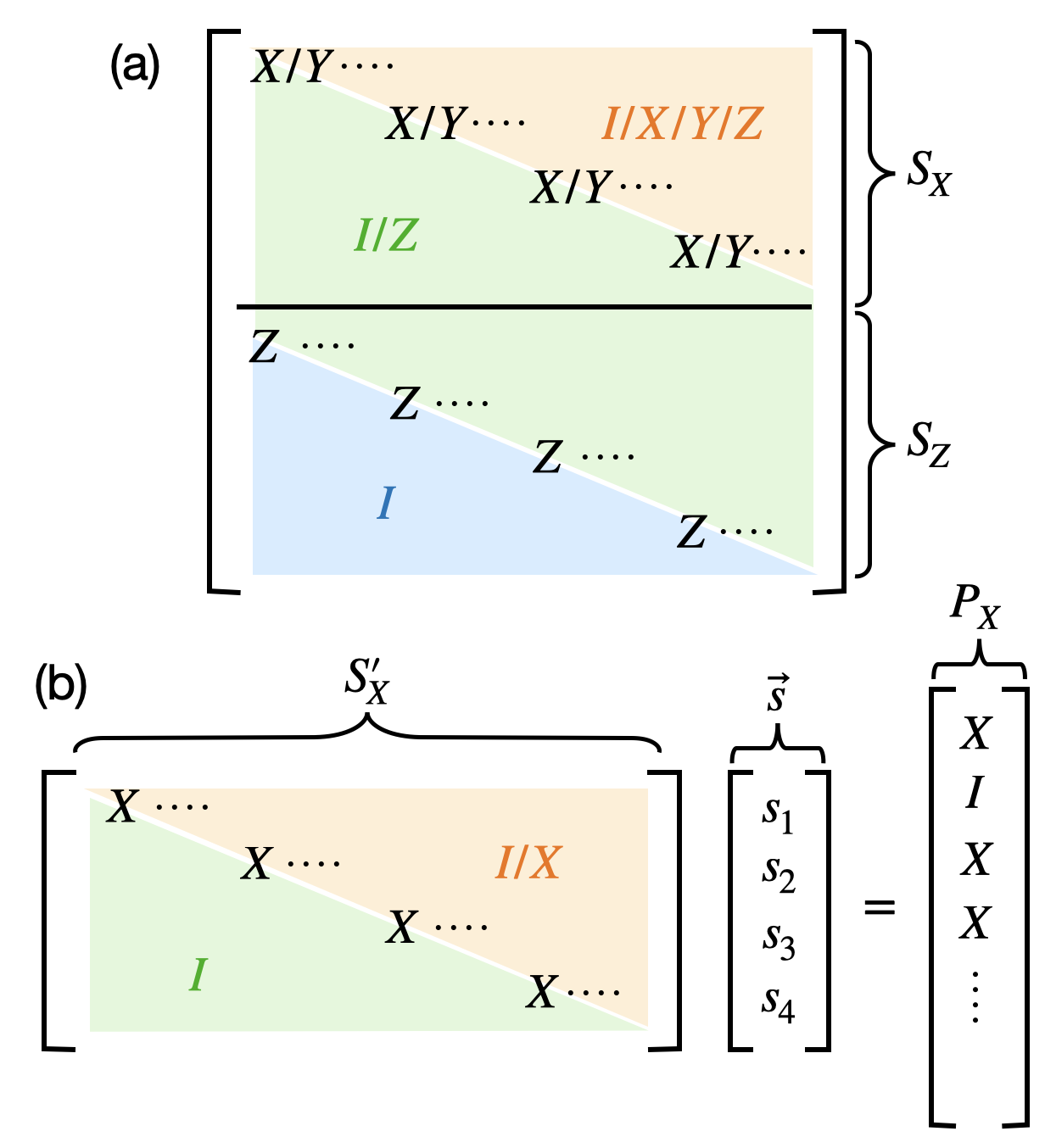}
\caption{\label{fig:canon_solve}(a) The canonicalized stabilizer tableau (b) The linear equation to cancel out $P_X$. The tableau of $S_X'$ is identical to the $S_X$ part of (a), but replacing $Y$ with $X$ and replacing $Z$ with $I$, and changing the sign to $+1$.}
\end{figure}

Finally, in the case where $P$ is the uniform average of many Pauli operators ($A>1$), one can sample $P_a$ from the uniform distribution and estimate the expectation via the above procedure. Note that the mixed stabilizer state in Eq. (\ref{mixed_stab_comp_basis}) does not depend on the observable, so in practice, one samples $P_a$ and $t$ simultaneously in spirit of Monte Carlo sampling.

We detail the algorithm in Algorithm \ref{alg:additive}. To analyze the time complexity, first notice that the pre-processing step in line 1-3 has a time complexity of $O(n^3)$ due to the canonicalization as well as finding $A$, $b$, and $r$. Next, lines 7-8 take $O(n)$ time, and line 9 also takes $O(n)$ time because the stabilizer tableau of $S_{X,j}'$ (first term in Fig. \ref{fig:canon_solve}(b)) is already in the upper-triangular form so one only needs to perform substitutions row by row. \vtwo{Line 13 takes $O(\max(n,t(n)))$ time.} Finally, the standard Chernoff bound gives the sample complexity $M=\log(\frac{2}{\delta})/\epsilon^2$, which gives the total time complexity of $O(n^3 +  \max(n,t(n))\frac{\log(2/\delta )}{ \epsilon^2})$. When $A=1$, lines 8-11 only need to be performed once.
\end{proof}

\begin{algorithm}[H]
\textbf{Input:}  $U_{c,l}$, $D$, $\{P_a\}$, $\epsilon$, $\delta$ \\
\textbf{Output:} estimate of $\bra{0} U_{c,l}^\dag D^\dag(\frac{1}{A}\sum_a P_a) D  U_{c,l} \ket{0}$
\caption{Evaluating observable of magic-depth-one circuit up to $\epsilon$ error with probability $1-\delta$}\label{alg:additive}
\begin{algorithmic}[1]
\State compute $\{S_{X,j}\}$, $\{S_{Z,k}\}$ by canonicalizing $U_{c,l}\ket{0}$
\State decompose $S_{X,j}=S_{X,j}^{(Z)}S_{X,j}', \forall i$
\State find $A$, $b$, $r$ from $\{S_{Z,k}\}$ (Eq. (\ref{mixed_stab_comp_basis}))
\State $M=log(\frac{2}{\delta})/\epsilon^2$
\State define an array of samples $\vec{P_m}$, $m=1...M$
\For{$m$ in $\{1,2,...,M\}$}
\State sample $P_a$ uniformly, sample $t$ uniformly
\State decompose $P_a=P_{X,a}P_{Z,a}$
\State solve $\prod_j S_{X,j}' s_j = P_X$ for $\vec{s}$
\If {solution exists}
    \State $\tilde{P} = (\prod_j S_{X,j}^{(Z)} s_j) P_Z$
    \State $e^{i \phi_{\tilde{P}}(At+b)} = \bra{At+b} \tilde{P} \ket{At+b}$
    \vtwo{\State compute $\phi_{D'}(x)$ from $D$, $P_a$, and $x$ (Eq. (\ref{push_thru_def}))}
    \State $\vec{P}_m = e^{i (\phi_{\tilde{P}}(x)+\phi_{D'}(x))}$
\Else
    \State $\vec{P}_m =0$
\EndIf
\EndFor
 \State \Return  median-of-mean estimate of $P$ from $\vec{P}_m$
\end{algorithmic}
\end{algorithm}

\subsection{Hardness of Sampling the Original Distribution}
One may wonder if estimating the Pauli observables, amplitudes, and more generally marginal probability distributions enables sampling from the distribution approximately. If this happens, then either the average-case hardness conjecture in \cite{bremner2016average} is false or the polynomial hierarchy collapse to the third level. Both of them seem unlikely to happen.

We give strong evidence that estimating marginal probability distributions does not allow for the approximate sampling from the entire distribution. With access to the marginal probability distributions, the typical strategy to sample the entire distribution is via the bit-by-bit sampling: one first sample the first bit from its marginal distribution, then sample the second bit conditioned on the first bit being the sampled value, then sample the third bit conditioned on the first two bits, and so on. This would require the computation of the probability of, say the $k$-th bit conditioned on bit $1 \ldots k-1$. The conditional probability is related to the marginal probability by
\begin{equation}
    P(x_k | x_{k-1} \ldots x_{1}) = \frac{P(x_k x_{k-1} \ldots x_{1})}{P(x_{k-1} \ldots x_{1})}
\end{equation}
Where $x_i \in \{0,1\}$ denotes the $i$-th bit. crucially, one can only estimate the denominator $P(x_{k-1} \ldots x_{1})$ up to polynomially small additive error. When $k$ becomes $O(n)$, the true value of $P(x_{k-1} \ldots x_{1})$ typically becomes exponentially small, so the error is way bigger than the ground truth. Therefore, the error in the denominator results in a big error in the conditional probability. The above analysis strongly suggests that estimating observable up to a small additive error is strictly weaker than sampling from the distribution up to a small total variation distance.

\subsection{Classical Algorithm for Sampling the Marginal Distribution}
While we have seen strong evidence that estimating observables up to a small additive error is insufficient to sample from the original distribution, we show that it is possible to sample from the marginal distribution of a sufficiently small subsystem with $k$ qubits. We accomplish this by computing the $2^k$ marginal probabilities to sufficient accuracy using Algorithm \ref{alg:additive}
.

\begin{corollary}
Given a circuit with one layer of diagonal non-Clifford gates $U=U_{c,r}(\prod_i D_i) U_{c,l}$, there exists a classical algorithm to sample from the marginal distributions of $k$ qubits $\epsilon$ close to the actual distribution in the total variation distance (Eq. (\ref{tvd})) and with $1-\delta$ probability in time $O(2^k(n^3+4^{k-1} n\epsilon^{-2}\log(2^{k+1}/\delta)))$
\end{corollary}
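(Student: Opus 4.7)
The plan is to reduce sampling from the $k$-qubit marginal to estimating each of the $2^k$ marginal probabilities via Algorithm \ref{alg:additive}, and then to draw a single sample from the resulting empirical distribution.

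For each outcome $y \in \{0,1\}^k$ on the chosen subsystem, the marginal probability is
\begin{equation}
    p(y) = \bra{0} U^{\dagger} \Pi_y U \ket{0}, \qquad \Pi_y = \prod_{i=1}^{k} \frac{I + (-1)^{y_i} Z_i}{2},
\end{equation}
and $U_{c,r}^{\dagger} \Pi_y U_{c,r}$ is a uniform average of $A = 2^k$ Pauli operators after absorbing $U_{c,r}$ into the Pauli, exactly the input format assumed by Theorem \ref{thm:additive}. I would invoke Algorithm \ref{alg:additive} once for each of the $2^k$ outcomes.

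To guarantee total variation error at most $\epsilon$ with probability at least $1-\delta$, I would set the per-marginal additive accuracy to $\epsilon' = \epsilon / 2^{k-1}$ and the per-marginal failure probability to $\delta' = \delta / 2^{k}$. A union bound over the $2^k$ runs gives simultaneously $|\tilde{p}(y) - p(y)| \le \epsilon'$ for all $y$ with probability $\ge 1-\delta$, whence
\begin{equation}
    \tfrac{1}{2} \sum_y |\tilde{p}(y) - p(y)| \le 2^{k-1} \epsilon' = \epsilon.
\end{equation}
Substituting $1/\epsilon'^{2} = 4^{k-1}/\epsilon^2$ and $\log(2/\delta') = \log(2^{k+1}/\delta)$ into the runtime of Theorem \ref{thm:additive} and summing over the $2^k$ outcomes yields the claimed $O\bigl(2^{k}(n^3 + 4^{k-1} n \epsilon^{-2} \log(2^{k+1}/\delta))\bigr)$ complexity.

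The raw estimates $\tilde{p}(y)$ need not be non-negative or normalized, so I would truncate negatives to zero and renormalize; this standard step inflates the total variation distance by at most a constant factor that can be pre-absorbed into $\epsilon$ by tightening the per-marginal accuracy parameter. Drawing a single sample from the resulting distribution then produces the desired marginal sample. The genuine obstacle here is not any single step but the combined $2^k \cdot 4^{k-1}$ scaling that forces $k = O(\log n)$; as the preceding subsection argues, this restriction appears fundamental to estimating marginals additively, since the division required to form conditional probabilities amplifies the additive error once the true marginals become exponentially small.
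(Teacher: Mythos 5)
Your proof is correct and follows essentially the same route as the paper's: estimate all $2^k$ marginal probabilities via Algorithm~\ref{alg:additive} with per-marginal accuracy $\epsilon' = \epsilon/2^{k-1}$ and failure probability $\delta' = \delta/2^k$, apply a union bound, and sum the runtimes. The only addition is that you explicitly handle truncation of negative estimates and renormalization to obtain a genuine probability distribution before sampling — a minor but legitimate point that the paper's proof leaves implicit, and your constant-factor bookkeeping for it is sound.
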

\begin{proof}
We sample by computing the $2^k$ marginal probabilities using Algorithm \ref{alg:additive} up to error $\epsilon'$ and with probability $1-\delta'$. Using the union bound, the total failure rate $\delta$ is given by $\delta = O(2^k \delta')$. The total variation distance $\epsilon$ is upper-bounded by $\epsilon \le \frac{1}{2} 2^k \epsilon'$. Plugging the relation into Theorem \ref{thm:additive} to get the time complexity $O(4^{k-1}n \epsilon^{-2} \log(\frac{2^{k+1}}{\delta}))$ for estimating each marginal probability. Finally, one has to repeat for all $2^k$ marginal probabilities which gives the stated time complexity.
\end{proof}
The above corollary suggests that sampling from the marginal of $k=O(\log(n))$ qubits up to a 1/poly($n$) error is classically efficient.

\section{Path Integral}
Lastly, we discuss the classical simulation of quantum circuits beyond one magic layer. While we do not have a polynomial-time classical algorithm here, nor do we expect one, the shallow magic depth can still be exploited to reduce the cost of classical simulations. We accomplish this by performing a path integral at each magic layer.

Suppose we want to compute the amplitude of a unitary $U$ that contains $d$ layers of diagonal magic gate.
\begin{equation}\label{d_layer_magic}
    U = D_{d} U_{c,d} D_{d-1}\ldots U_{c,2} D_{1} U_{c,1}
\end{equation}
Where $U_{c,i}$ denotes some Clifford unitary and $D_i$ denotes the diagonal magic gate. We do not require $D_i$ to factorize into products of local gates, but merely require that each entry $\bra{x}D_i\ket{x}$ can be computed efficiently. We show that there exists a path integral algorithm that scales favorably in the magic depth $d$ then other methods.
\begin{theorem}
Given the unitary $U$ defined in Eq. (\ref{d_layer_magic}), there exists a classical path integral algorithm to compute $\bra{x}U\ket{0}$ in time $O((t(n)+n^3) (2d)^{n+1})$, where $t(n)$ denotes the runtime to compute $\bra{x}D_i\ket{x}$, and with space $O(n^2 + n\log(n))$.
\end{theorem}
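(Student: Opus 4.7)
My plan is to construct an explicit recursive path-integral algorithm and analyze its runtime and memory. The starting point is to insert a resolution of the identity $\mathbb{1}=\sum_{z}\ket{z}\bra{z}$ in the computational basis immediately after each Clifford block, so that each diagonal magic layer collapses onto a scalar phase $e^{i\phi_i(z_i)}=\bra{z_i}D_i\ket{z_i}$ computable in time $t(n)$ by assumption. This rewrites
\begin{equation}
\bra{x}U\ket{0}=\sum_{z_1,\ldots,z_{d-1}} e^{i\phi_d(x)}\prod_{i=1}^{d-1}e^{i\phi_i(z_i)}\,\bra{x}U_{c,d}\ket{z_{d-1}}\prod_{i=1}^{d-1}\bra{z_i}U_{c,i}\ket{z_{i-1}},
\end{equation}
with the convention $z_0\equiv 0$. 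Each Clifford matrix element $\bra{z_{i+1}}U_{c,i+1}\ket{z_i}$ can be evaluated in $O(n^3)$ time by canonicalizing $U_{c,i+1}\ket{z_i}$ in the affine-subspace / quadratic-phase form of Eq.~\eqref{clifford_comp_basis} and then reading off the inner product with a computational-basis vector.

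The naive enumeration of all $(z_1,\ldots,z_{d-1})\in\mathbb{F}_2^{n(d-1)}$ visits $2^{n(d-1)}$ paths, which is exponential in $d$. The core of the proof is to reorganize this sum as a tree of branching factor $2d$ and depth $n+1$, traversed qubit by qubit rather than slice by slice: at each recursive step one commits to the value of a single qubit at a single one of the $d$ slices, giving $2d$ branches, after which the residual problem is structurally identical on one fewer qubit. Iterating $n+1$ times reduces to the trivial amplitude on zero qubits, at which point the accumulated Clifford factors and $D_i$-phases are read off. An induction on the number of committed qubits then caps the number of leaves at $(2d)^{n+1}$, and each leaf takes $O(n^3+t(n))$ work, yielding the claimed runtime.

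For the memory bound $O(n^2+n\log n)$ I would traverse this tree by depth-first search, holding only the current root-to-leaf path in memory: one in-place stabilizer tableau of size $O(n^2)$ that is updated on descent and inverted on backtrack (using invertibility of both Clifford and diagonal updates), together with $O(n\log d)=O(n\log n)$ bits recording which of the $2d$ branches was taken at each of the $n+1$ depths. No intermediate tableaus are stored.

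The principal obstacle is the combinatorial step: rigorously proving the $(2d)^{n+1}$ leaf-count. One has to exhibit a bookkeeping rule assigning each unresolved qubit to a magic slice that is compatible both with the diagonal action of the $D_i$'s and with the Clifford tableau updates, so that no path is double-counted and the recursion does not silently reintroduce the naive $2^{n(d-1)}$ paths. Verifying that the invariant ``a single tableau plus $O(n\log n)$ bits encodes the residual sub-problem'' survives both kinds of updates is where I expect the bulk of the technical work to lie; the remaining ingredients -- stabilizer-form evaluation of Clifford matrix elements, phase lookup for the $D_i$, and DFS traversal -- are standard.
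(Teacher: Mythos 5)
The setup of your proof — inserting computational-basis resolutions of identity after the Clifford blocks so that each diagonal layer contributes a scalar phase, and noting that $\bra{z'}U_c\ket{z}$ is computable in $O(n^3)$ time via the affine-subspace/quadratic-phase representation — is correct and matches the paper. The gap is in the combinatorial step, and you have correctly identified it but not closed it. You propose to read $(2d)^{n+1}$ literally, as a tree of branching factor $2d$ and depth $n+1$ obtained by committing to one (qubit, slice) pair per recursion step. That decomposition does not obviously exist: committing to one qubit's value at one slice leaves $n-1$ qubits undetermined at that slice and all $n$ qubits undetermined at every other slice, so the residual problem is \emph{not} ``structurally identical on one fewer qubit,'' and there is no reason $n+1$ commitments would exhaust the $n(d-1)$ unresolved variables. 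Absent the bookkeeping rule you gesture at, the recursion silently falls back to the naive $2^{n(d-1)}$ sum.

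The paper's bound comes from a different and much simpler recursion: it splits the circuit in the \emph{time} direction, inserting a single full resolution of identity $\sum_y\ket{y}\bra{y}$ at the middle magic layer $\lfloor d/2\rfloor$. This turns one $d$-layer amplitude into $2^n$ products of two $\lfloor d/2\rfloor$-layer amplitudes, i.e.\ $2^{n+1}$ subamplitudes per level. After $\lceil\log d\rceil$ levels of binary splitting one reaches the base case $d=1$, where $\bra{x}D_1 U_{c,1}\ket{0}$ is a single diagonal entry times a Clifford amplitude, computable in $O(t(n)+n^3)$. The total number of leaves is $2^{(n+1)\lceil\log d\rceil}\le (2d)^{n+1}$; the exponent $n+1$ is an artifact of $2^{n+1}$ children per recursion level, not of a qubit-by-qubit traversal. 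The space bound then follows immediately: a depth-first traversal needs only $O(n\log d)$ bits for the stack of intermediate bitstrings and $O(n^2)$ for the tableau of the single subamplitude currently being evaluated, which is re-derived rather than stored. Your DFS-with-in-place-tableau idea for the space bound is in the right spirit, but without the correct recursion to hang it on, the leaf-count claim — which you yourself flag as the crux — remains unproven.
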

\begin{proof}
 To begin with, we show that computing $\bra{x} D_i U_{c,i} \ket{y}$, where $\ket{x}$ and $\ket{y}$ are computational basis states, can be computed in time $O(t(n)+n^3)$. To see that, first realize that $U_{c,i} \ket{y}$ can be written in the computational basis in time $O(n^3)$  (Eq. (\ref{clifford_comp_basis})). Next, $\bra{x}D_i = \bra{x}D_i\ket{x} \bra{x}$ and computing $\bra{x}D_i\ket{x}$ takes time $t(n)$. Finally, $\bra{x} U_{c,i} \ket{y}$ simply retrieves the term from Eq. (\ref{clifford_comp_basis}). The space complexity is $O(n^2)$ which comes from storing the affine subspace and the quadratic form in Eq. (\ref{clifford_comp_basis}). Notice that unlike the hardness of computing the amplitude in Theorem \ref{thm:hardness_amplitude}, computing the amplitude here is classically easy because there is only one Clifford unitary on one side.
 
 Next, we follow \cite{aaronson2016complexity} and perform the path integral recursively. in the base case $d=1$, we compute $\bra{x}U\ket{0}$ directly in time $O(t(n)+n^3)$. For generic $d$, we insert an identity $I=\sum_y \ketbra{y}{y}$ at layer $\left \lfloor{\frac{d}{2}}\right \rfloor$.
 \begin{equation}
 \begin{split}
     \bra{x}U\ket{0} = \sum_y \bra{x}D_{d} U_{c,d} D_{d-1}\ldots U_{c,\left \lfloor{\frac{d}{2}+1}\right \rfloor}  \ket{y} \\
    \times  \bra{y} D_{\left \lfloor{\frac{d}{2}}\right \rfloor} \ldots U_{c,2} D_{1} U_{c,1} \ket{0}
      \end{split}
 \end{equation}
Therefore, we reduce the problem to computing $2^{n+1}$ amplitudes with $\left \lfloor{\frac{d}{2}}\right \rfloor$ layers of magic gate and summing them up. Applying the above process recursively until $d=1$. There are at most $2^{\left \lceil{\log(d)}\right \rceil (n+1)} \le (2d)^{n+1}$ amplitudes to compute, thus the runtime is $O((t(n)+n^3)(2d)^{n+1})$. Storing the bitstrings from the recursion takes space $O(n \log(d))$, and there is a space cost of $O(n^2)$ in computing amplitude but it does not carry over the recursion, leaving a space complexity of  $O(n^2 + n\log(n))$.
\end{proof}
When $D_i$ factorizes into products of local gates, $t(n)=O(n)$, and thus the time complexity becomes $O(n^3(2d)^{n+1})$. Crucially, the scaling with magic depth is sub-exponential. This should be compared with low-stabilizer-rank simulations, where the time complexity is exponential in the total number of magic gates which is $O(dn)$. On the other hand, if one performs the standard path-integral simulations, the time complexity also depends on the number of Clifford gates. Finally, the state vector simulation has favorable scaling in the number of magic gates but has an exponential memory cost. Therefore, in the regime where there are many Clifford gates, yet an extensive number of magic gates concentrates over a few layers, our algorithm provides a significant speedup over other methods. 

\section{Discussion}
We have systematically investigated the classical simulability of quantum circuits with an extensive number of magic gates concentrating at one layer. The complexity depends on the type of task and the desired precision. We show that computing the amplitude in $T$-depth-one circuits is GapP-complete, while computing Pauli observable is in P. However, adding one more layer of $T$ gates or replacing $T$ with $T^\frac{1}{2}$ immediately increases the hardness of computing the Pauli observable to be GapP-complete. 

The above results hold up to a small multiplicative error. If one only demands 1/poly($n$) additive error, then estimating both amplitudes and Pauli observable can be performed classically in polynomial time, while one can sample from any $\log(n)$ sized marginal distributions. Sampling from the entire distribution is still classically hard, under certain plausible complexity conjectures. Lastly, we give a path integral algorithm that, despite scaling exponentially in $n$, scales favorably in the number of magic layers. We expect this algorithm to outperform other algorithms in the regime where extensive magic gates concentrate at a few layers.

Overall, our work provides new insights into the complexity of magical circuits, highlighting the importance of magic depth and the type of computational tasks that could drastically affect the hardness of simulations. In practice, we give a classical algorithm to estimate amplitudes, Pauli observable, and sample from a small marginal in magic-depth-one circuits to the same precision that BQP can achieve. This rules out the possibility of quantum advantages in magic-depth-one circuits by estimating amplitude or Pauli observable. One would need at least two layers of magic gates or take advantage of the full sampling power to achieve quantum advantages.

\subsection{Comparison with Existing Work}
We compare our results to the existing results in the literature. First, we show that our result does not challenge the hardness of BQP, in other words, the full power of quantum computing. It is known that in practice, putting all $T$ gates in the first layer of the circuit is already sufficient for universal quantum computation because one can perform magic state injection \cite{knill2004fault}. Does our easiness result of computing Pauli observable in $T$-depth-one circuits (Theorem \ref{thm:3ch_pauli}) imply that computing Pauli observable of generic quantum circuits is classically easy? 

This is not true because of the following: in magic state injection, one has to post-select the ancilla to be $\ket{0}$ or perform a feedback operation if the ancilla is measured to be $\ket{1}$. The feedback operation is equivalent to a $CS$ gate that is non-Clifford. On the other hand, if one post-select $k$ ancilla, then evaluating the expectation of $P$ becomes $P \otimes \ketbra{0^k}{0^k}_A$, where $\ketbra{0^k}{0^k}_A$ acts on the $k$ ancilla qubits. When $k$ is $\omega(\log(n))$, $P \otimes \ketbra{0^k}{0^k}_A$ cannot be written as a polynomial sum of Pauli operators, and thus one cannot compute Pauli expectations efficiently when injecting $\omega(\log(n))$ $T$ gates.

Another way is to estimate $P \otimes \ketbra{0^k}{0^k}_A$ using Algorithm \ref{alg:additive}. The issue here is that for every ancilla included, the post-selected probability decreases by 1/2. Therefore, when $k$ is $\omega(\log(n))$, the expectation value of $P \otimes \ketbra{0^k}{0^k}_A$ is super-polynomially small, so Algorithm \ref{alg:additive} cannot estimate it in polynomial time. Therefore, our results do not allow us to compute generic Pauli expectation values of any quantum circuit beyond $\log(n)$ $T$ gates.

Similarly, placing all $T$ gates in the last layer of a constant-depth Clifford circuit is also sufficient for universal quantum computation because one can realize measurement-based quantum computation \cite{raussendorf2003measurement}. Nevertheless, one needs post-selection here again, so the classical simulation becomes intractable after $\omega(\log(n))$ $T$ gates. The above analysis in fact reveals the power of post-selection: while we have shown that $T$-depth-one circuit is strictly weaker than BQP unless P=BQP, augmenting it with post-selection promotes its power to postBQP which is equal to PP \cite{aaronson2005quantum}. This shows a sharp complexity separation by adding the power of post-selection which has been commonly observed in literature.

Next, we compare our go results (Theorem \ref{thm:3ch_pauli} and Algorithm \ref{alg:additive}.) with the earlier results that are based on low-stabilizer-rank approximations ~\cite{bravyi2016improved,bravyi2016trading,bravyi2019simulation,kocia2020improved,qassim2021improved}. These algorithms can accomplish strong simulations up to small multiplicative error, or perform weak simulations by sampling from a distribution that is close to the actual distribution in the total variation distance. Although Theorem \ref{thm:3ch_pauli} is strictly stronger than the previous results, Algorithm \ref{alg:additive} cannot be directly compared with the early methods because it only provides an additive estimate. Depending on the setup, one might favor one different algorithms. If one only needs to estimate observable, such as in the variation quantum eigensolver, to the precision that BQP can achieve, then Algorithm \ref{alg:additive} is more favorable. On the other hand, there are instances, such as computing the cross-entropy benchmark, where an exponentially high precision is required~\cite{arute2019quantum}. In this case, Algorithm \ref{alg:additive} would not be favorable and strong simulations up to a small multiplicative error would be required.

Finally, we point out that in the special case of IQP circuits, there already exist classical polynomial algorithms to compute the Pauli expectation of degree-three IQP circuits \cite{shepherd2009temporally}, as well as estimating the Pauli observable and amplitudes in generic IQP circuits \cite{yung2020anti} up to a small additive error. These algorithms can be considered a special case of our Theorem \ref{thm:3ch_pauli} and Algorithm \ref{alg:additive}. Sampling from a $\log(n)$ marginal of any IQP circuit can also be performed efficiently using the gate-by-gate sampling algorithm \cite{bravyi2022simulate}. Nevertheless, our results generalize to any magic-depth-one circuits.

\subsection{Exploiting Magic Depth in Other Tasks}
We point out some other recent work that exploits the structure of magic depth in other tasks. The first example is quantum state learning. In \cite{lai2022learning}, the authors proposed a tomography procedure to efficiently learn states generated by $O(\log(n))$  $T$ gates concentrated in one layer.  This algorithm is a ``proper'' learner in the sense that it outputs a Clifford + $T$ circuit whose output approximates the state being learned. On the other hand, while there are other results that can efficiently learn states generated by $O(\log(n))$  $T$ gates, possibly at different layers \cite{grewal2022low,grewal2023efficient,grewal2024improved,leone2024learning,hangleiter2024bell,oliviero2024unscrambling,leone2024learningefficient}, these algorithms are not proper learners because they only generate a low stabilizer-rank representation of the state, but not the Clifford + $T$ circuit that generates it. Proper learning of magic states beyond $T$-depth one remains an open problem. 

Circuits with shallow magic depth have also been investigated in the context of quantum dynamics and phase transitions \cite{niroula2023phase,bejan2024dynamical,turkeshi2024magic}. In Ref. \cite{niroula2023phase}, the authors consider one layer of small-angle rotations sandwiched by two Clifford encoder and decoder. They use this circuit to model the effect of coherent error on error correction. They show that there exists a phase where the rotation angle is small and the stabilizer syndrome measurements automatically removes the magic generated by the rotation. When the rotation angle is big, the circuit is in another phase where the stabilizer syndrome measurements cannot remove the magic.

\subsection{Future Directions}
We highlight several future directions. First, so far we have treated the Clifford unitary as a ``black box'' and have not exploited the locality structure within. Since we do not expect interference between causally connected magic gates, it is natural to ask whether the locality of the Clifford unitary can be exploited to reduce the cost of classical simulation, going beyond concentrating all magic gates in one layer. 

Second, since $\mathcal{FH}_2$ already contains hard problems such as factoring, it would be interesting to find a circuit with two layers of magic that can solve a hard problem. One subtlety here is that in $\mathcal{FH}_2$, two layers of Hadamard gates sandwich a layer of almost-classical gates that can encode any functions computable in polynomial time. On the other hand, two layers of magic sandwich a Clifford unitary, and a Clifford unitary is more restrictive than functions computable in polynomial time. This is because a Clifford unitary can be uniquely specified by its action on all the $X$ and $Z$ operators, so they have fewer degrees of freedom than functions computable in polynomial time. Although this does not prevent magic-depth-two circuits from performing hard tasks, it presents an additional challenge in designing such circuits.

Lastly, so far we have primarily considered IQP circuits as our paradigmatic model of magic-depth-one circuits. It would be interesting to find other examples of magic-depth-one circuits that can provide additional insights into the power of magic-depth-one circuits. For example, the ability to compute the amplitude in IQP circuits already allows one to compute the amplitudes of generic magic-depth-one circuits because of their GapP completeness. Similarly, is it possible to find a subclass of magic-depth-one circuits such that sampling from them encompasses the hardness of sampling from generic magic-depth-one circuits? We leave this question to future work. 

\vtwo{The importance of this work goes beyond the study of a toy circuit model, as recent efforts indicate that a very large set of circuits can be converted into magic-depth-one circuits. Recently, Ref. \cite{fux2024disentangling} provided numerical evidence showing that random Clifford+$T$ circuits can be recompiled to a magic depth of one. Subsequently, Ref. \cite{liu2024classical} provides a more analytic understanding and an algorithm that `transports' mid-circuit magic gates to the first layer. Following these observations, two open questions arise: What types of circuits does this algorithm apply to? Can this algorithm in Ref.\cite{liu2024classical} be generalized to putting magic gates in one layer in the middle of the circuit? }


We also point out the implication of our results for future quantum computing experiments. Ref. \cite{niroula2023phase} has analyzed the effect of coherent errors modeled by one layer of small-angle rotations. While they have only considered stabilizer codes generated by random Clifford circuits and small system size, our technique allows for analyzing the effect of non-Pauli noise on generic stabilizer codes and at a much larger system size. 

Magic-depth-one circuits could also be potentially useful for benchmarking beyond the Clifford regime. Our result shows that while estimating observables up to 1/poly($n$) additive error is classically easy, sampling from the full distribution remains hard in the worst case. Therefore, magic-depth-one circuits perform hard tasks but there are probes to partially verify the distribution. Such behavior sits between the Clifford randomized benchmarking in which the entire computation is classically easy, and random circuit sampling which is hard to spoof but also hard to verify. This renders the magic-depth-one circuits attractive for benchmarking future fault-tolerant quantum computers. 

\begin{acknowledgments}
(Y.Z.)$^2$ would like to thank Qi Ye,  Weiyuan Gong, David Gosset, and Sarang Gopalakrishnan for useful discussions. Yifan Zhang acknowledges support from NSF QuSEC-TAQS OSI 2326767. Yuxuan Zhang acknowledges support from the Natural Science and Engineering Research Council (NSERC) of Canada, and support from the Center for Quantum Materials and the Centre for Quantum Information and Quantum Control at the University of Toronto. 
\end{acknowledgments}

\begin{widetext}
\appendix

\end{widetext}


\bibliography{apssamp}

\end{document}